\documentclass[USenglish]{lipics-v2016}

\usepackage[roman]{complexity}
\usepackage{graphicx}
\usepackage{tikz}
\usepackage{tkz-graph}
\usepackage[ruled,vlined]{algorithm2e}
\usepackage{xifthen}

\newtheorem{proposition}[theorem]{\bf Proposition}
\newtheorem{openproblem}{\bf Open problem}

\newcommand{\N}{\mathbb{N}}
\newcommand{\kSAT}[1]{#1\text{-}\SAT}
\newcommand{\Pad}[1]{\mathsf{Pad}_{#1}}
\newcommand{\ETH}{\mathsf{ETH}}
\newcommand{\SETH}{\mathsf{SETH}}
\newcommand{\enum}[1]{\Pi_{#1}}
\newcommand{\Enum}[1]{\mathrm{Enum\cdot#1}}
\newcommand{\EnumP}{\mathrm{ EnumP}}
\newcommand{\En}{\mathsf{En}}
\newcommand{\Ex}{\mathsf{Ex}}

\newcommand{\OutputP}{\mathrm{ OutputP}}
\newcommand{\IncP}{\mathrm{IncP}}
\newcommand{\UIncP}{\mathrm{UsualIncP}}
\newcommand{\DelayP}{\mathrm{DelayP}}

\newcommand{\pr}{\mathbb{P}}
\newcommand{\timev}[3][]{\ifthenelse{\isempty{#2}}{t[#1]}{\ifthenelse{\isempty{#1}}{t_{#2}^{#3}}{t_{#2}^{#3}[#1]}}}

\renewcommand{\th}{\text{th}}

\title{On the Complexity of Enumeration}
\author[1]{Florent Capelli}
% \author[2]{Arnaud Durand}
\author[2]{Yann Strozecki}
\affil[1]{Birkbeck University, London}
% \affil[2]{Université Paris Diderot}
\affil[2]{Université de Versailles Saint-Quentin-en-Yvelines, DAVID laboratory}
\Copyright{Florent Capelli and Yann Strozecki}
\subjclass{F.2.2 Nonnumerical Algorithms and Problems}
\keywords{enumeration, incremental time, polynomial delay, structural complexity, exponential time hypothesis}
\SetKwInput{KwData}{Input}

\begin{document}

\maketitle

\begin{abstract}
We investigate the relationship between several enumeration complexity classes and focus in particular on problems having enumeration algorithms with incremental and polynomial delay ($\IncP$ and $\DelayP$ respectively). We show that, for some algorithms, we can turn an average delay into a worst case delay without increasing the space complexity, suggesting that $\IncP_1 = \DelayP$ even with polynomially bounded space. We use the Exponential Time Hypothesis to exhibit a strict hierarchy inside $\IncP$ which gives the first separation of $\DelayP$ and $\IncP$. Finally we relate the uniform generation of solutions to probabilistic enumeration algorithms with polynomial delay and polynomial space. 
\end{abstract}

\section{Introduction}

An enumeration problem is the task of listing a set of elements, usually corresponding to the solutions of a search problem, such as enumerating the spanning trees of a given graph or the satisfying assignments of a given formula.
One way of measuring the complexity of an enumeration algorithm is to evaluate how the \emph{total time} needed to compute all solutions relates with the size of the input and with the size of the output, as the number of solutions may be exponential in the size of the input. Therefore, a problem 
is considered tractable and said to be \emph{output polynomial} when it can be solved in polynomial time in the size of the \emph{input and the output}. This measure is relevant when one wants to generate and store all elements of a set, for instance to constitute a library of interesting objects, as it is often done in biology or chemistry~\cite{barth2015efficient}. 

Another application is to use enumeration algorithms to compute optimal solutions by generating them all or to compute statistics on the set of solutions such as evaluating its size. If this set is too large, it can be interesting to generate only a fraction of it. Hence, a good algorithm for this purpose should guarantee that it will find as many solutions as we need in a reasonable amount of time. In this case, \emph{polynomial incremental time} algorithms are more suitable: an algorithm is in  polynomial incremental time if the time needed to enumerate the first $k$ solutions is polynomial in $k$ and in the size of the input. Such algorithms naturally appear when the enumeration task is of the following form: given a set of elements and a polynomial time function acting on tuples of elements, produce the closure of the set by the function. One can generate such closure by iteratively applying the function until no new elements are found. As the set grows bigger, finding new elements becomes harder. For instance, the best algorithm to generate all circuits of a matroid uses some closure property of the circuits~\cite{khachiyan2005complexity} and is thus in polynomial incremental time. The fundamental problem of generating the minimal transversals of a hypergraph can also be solved in subexponential incremental time~\cite{fredman1996complexity} and some of its restrictions in polynomial incremental time~\cite{eiter2003new}.

Polynomial incremental time algorithms are not always satisfactory as the delay between the last solutions may be exponentially large. In some cases, the user may be more interested in having a regular stream of solutions. The most efficient enumeration algorithms guarantee a \emph{delay} between consecutive solutions that is bounded by a polynomial in the input. \emph{Polynomial delay} algorithms produce solutions regularly and generate the set of solutions in time linear in the size of the output, which can still be overall exponential. There exists two main methods giving polynomial delay algorithms, namely the \emph{backtrack search} and the \emph{reverse search}~\cite{mary2013enumeration}. These methods have been used to give polynomial delay algorithms for enumerating the cycles of a graph~\cite{read1975bounds}, the satisfying assignments of variants of $\SAT$~\cite{creignou1997generating}, the spanning trees and connected induced subgraphs of a graph~\cite{avis1996reverse} etc. These methods are particularly efficient as they only need a \emph{polynomial space}, which is required in practice. Another approach used to enumerate elements of a set while using only polynomial space, is to design and use random generators of solutions, a very active area of research~\cite{duchon2004boltzmann}. Following Goldberg~\cite{Goldberg91}, we also give precise connections between the existence of efficient random generators and efficient randomized enumeration algorithms.
 
Enumeration algorithms  have been studied for the last $40$ years~\cite{read1975bounds} and the notions of incremental polynomial time and polynomial delay already appear in~\cite{johnson1988generating}. However, the structural complexity of enumeration has not been investigated much, one reason being that it seems harder to formalize than decision or counting complexity. Recent work gives a framework for studying the parametrized complexity of enumeration problems~\cite{creignou2017paradigms} and an analogue of the polynomial hierarchy for the incremental time has been introduced~\cite{creignou2017complexity}. The complexity of enumeration when the order of the output is fixed has also been studied, for instance in~\cite{CreignouOS11}. However, from the point of view of structural complexity, it makes enumeration complexity artificial and it mainly boils down to decision complexity as explained in Section $2.4$ of \cite{phdstrozecki}.

The main difficulty in the study of structural complexity of enumeration is that complete problems are known only for $\EnumP$, the equivalent of $\NP$ in enumeration, but not for the other natural classes. In this paper, we therefore focus on understanding and separating these classes by using classical hypotheses in complexity theory. Such hypotheses are needed since we ask the generated solutions to be checkable in polynomial time, a reasonable assumption which makes separation of classes much harder.
The aim of this paper is twofold. First, we would like it to be usable as a short survey giving the definition of the main enumeration complexity classes with context and open problems as well as folklore results which were scattered over several unpublished work and thesis or only implicitly stated in a proof~\cite{Goldberg91,Schmidt09,phdstrozecki,Bagan09,brault2013pertinence,mary2013enumeration}. Second, we prove several new results which connects enumeration complexity to other fields such as fined grained complexity (Exponential Time Hypothesis), total search functions ($\TFNP)$ or the count-distinct problem (HyperLogLog).

This article is organized as follows: Sec.~\ref{sec:definition} is dedicated to the definition of the complexity classes either with polynomial time checkable solutions or not. We use classical complexity hypotheses to prove separation between most classes and provide an equivalence between the separation of incremental and output polynomial time and $\TFNP \neq FP$.
In Sec.~\ref{sec:amortization}, we recall how we can simulate algorithms in \emph{linear incremental time} with \emph{polynomial delay} algorithms if we allow an exponential space. We also prove that a linear incremental time algorithm which is sufficiently regular can be turned into and polynomial delay and polynomial space algorithms, paving the way for a proof that the two classes are equal. In Sec.~\ref{sec:hier}, we prove new separation results by using the Exponential Time Hypothesis ($\ETH$). More precisely, we exhibit a strict natural hierarchy inside classes of problems having incremental polynomial time algorithms which implies a separation between polynomial delay and incremental polynomial time, the last classes not yet separated. This separation is the first in enumeration complexity to rely on $\ETH$ and we believe it can lead to new conditional lower bounds on natural enumeration problems. Finally, in Sec.~\ref{sec:uniform}, we consider enumeration problems whose solutions can be given by a polynomial time uniform random generator. We improve a result of~\cite{Goldberg91} which shows how to turn a uniform random generator
into a randomized polynomial delay algorithm with \emph{exponential space}. We also show how to get rid of the exponential space if we are willing to allow repetitions by using algorithms to approximate the size of a dynamic set~\cite{kane2010optimal}.

\section{Complexity Classes}\label{sec:definition}
Let $\Sigma$ be a finite alphabet and $\Sigma^*$ be the set of finite words built on $\Sigma$.
We assume that our alphabet is $\{0,1,\sharp\}$. %and we add a finite number of additional symbols if necessary.
We denote by $|x|$ the size of a word $x \in \Sigma^*$ and by $|S|$ the cardinal of a set $S$.
We recall here the definition of an enumeration problem:
\begin{definition}[Enumeration Problem]
Let $A\subseteq \Sigma^{*}\times\Sigma^{*}$ be a binary predicate, we write $A(x)$  for the set of $y$ such that $A(x,y)$ holds. The enumeration problem $\enum{A}$ is the function which associates $A(x)$ to $x$.
\end{definition}

From now on, we only consider predicates $A$ such that $A(x)$ is finite for all $x$.
This assumption could be lifted and the definitions on the complexity of enumeration adapted to the infinite case.
We chose not to do so to lighten the presentation and because infinite sets of solutions implies some artificial 
properties when studying the complexity of enumeration. However there are interesting infinite enumeration problems such as listing all primes or all words of a context-free language~\cite{florencio2015naive}.

The computational model is the random access machine model (RAM) with addition, subtraction
and multiplication as its basic arithmetic operations. We have additional output registers,
and when a special \textsc{output} instruction is executed, the content of the output registers is produced.
A RAM machine solves $\enum{A}$ if, on every input $x \in \Sigma^{*}$, it produces a sequence $ y_{1}, \dots, y_{n}$ such that $ A(x) = \left\lbrace y_{1}, \dots, y_{n} \right\rbrace $ and for all $i\neq j,\, y_{i} \neq y_{j}$.

To simplify the definitions of complexity classes, we ask the RAM machine to stop
immediately after the last \textsc{output} instruction is executed. The cost of every instruction is assumed
to be in $O(1)$ except the arithmetic instructions which are of cost linear in the size of their inputs.
The space used by the machine at a given step is the sum of the number of bits required to store the integers in its registers. 

We denote by $T(M,x,i)$ the sum of the costs of the instructions executed before the $i^{\text{th}}$ \textsc{output} instruction. %We will denote by $T(M,n,i)$ the quantity $\Max_{|x| = n}T(M,x,i)$.
Usually the machine $M$ will be clear from the context and we will write $T(x,i)$ instead of $T(M,x,i)$.

\subparagraph{The class EnumP.}

We can naturally define complexity classes of enumeration problems by restricting the predicate $A(x,y)$ used to define enumeration problems. 

\begin{definition} 
Let $\cal{C}$ be a set of binary predicates, $\Enum{\cal{C}}$ is the set of problems $\enum{A}$ such that $A \in \cal{C}$.
\end{definition}

As we have explained, we restrict to the enumeration of finite sets: we let $\mathrm{F}$ be the set of all $A$ such that, for all $x$, $A(x)$ is finite and we will often consider $\Enum{F}$ as the most general class of enumeration problems. 

We are mostly interested in the class of problems which are the enumeration of the solutions of an $\NP$ problem.
Let $\mathrm{PtPb}$ be the set of predicates $A$ such that $A(x,y)$ is decidable in \textbf{P}olynomial \textbf{t}ime and is
\textbf{P}olynomially \textbf{b}alanced that is the elements of $A(x)$ are of size polynomial in $|x|$. We will denote the class $\Enum{PtPb}$ by $\EnumP$ for resemblance with $\NP$ as it is done in~\cite{phdstrozecki}.

The class $\EnumP$ has complete problems for the parsimonious reduction borrowed from counting complexity.
% and realizes a polynomial time bijection between the sets of solutions of two problems. The problem $\enum{SAT}$, the task of listing all solutions of a $3$-CNF formula is $\EnumP$-complete, since the reduction used in the proof that SAT is $\NP$-complete~\cite{cook1971complexity} is parsimonious. Some problems seemingly as hard as $\enum{SAT}$ are not $\EnumP$-complete because the parsimonious reduction is too restrictive. Therefore many other reductions have been considered~\cite{mary2013enumeration} but none has allowed to prove a completeness result for a class other than $\EnumP$.
\begin{definition}[Parsimonious Reduction]
Let $\enum{A}$ and $\enum{B}$ be two enumeration problems. 
A parsimonious reduction from $\enum{A}$ to $\enum{B}$ is a pair of polynomial time computable functions $f, g$ such that for all $x$, $g(x)$ is a bijection between $A(x)$ and $B(f(x))$.
\end{definition}

An $\EnumP$-complete problem is defined as a problem in $\EnumP$ to which any problem in $\EnumP$ reduces.
 The problem $\enum{SAT}$, the task of listing all solutions of a $3$-CNF formula is $\EnumP$-complete, since the reduction used in the proof that SAT is $\NP$-complete~\cite{cook1971complexity} is parsimonious.
 The parsimonious reduction is enough to obtain $\EnumP$-complete problem, but is usually too strong to make some natural candidates complete problems. For instance if we consider the predicate $SAT0(\phi,x)$ which is true if and only if $x$ is a satisfying assignment of the propositional formula $\phi$ or $x$ is the all zero assignment, then $SAT0(\phi)$ is never empty and therefore many problems of $\EnumP$ cannot be reduced to $\enum{SAT0}$ by parsimonious reduction. Many other reductions have been considered~\cite{mary2013enumeration}, inspired by the many one reduction, the Turing reduction or reductions for counting problems~\cite{durand2000subtractive}. However, no complete problems are known for the complexity classes we are going to introduce with respect to any of these reductions. This emphases the need to prove separations between enumeration complexity classes since we cannot rely on reductions to understand the hardness of a problem with regard to a complexity class.

\subparagraph{The class OutputP.}

To measure the complexity of an enumeration problem, we consider the total time taken to compute all solutions. Since the number of solutions can be exponential with regard to the \emph{input}, it is more relevant to give the total time as a function of the size of the input and of the \emph{the output}. In particular, we would like it to be polynomial in the number of solutions; algorithms with this complexity are said to be in output polynomial time or sometimes in polynomial total time. We define two corresponding classes, one when the problem is in $\EnumP$ and one when it is not restricted.

\begin{definition}[Output polynomial time]
 A problem $\enum{A}\in \EnumP$ (respectively, in $\Enum{F}$) is in $\OutputP$ (resp., $\OutputP^F$) if there is a polynomial $p(x,y)$ and a machine $M$ which solves $\enum{A}$ and such that for all $x$, $T(x,|A(x)|) < p(|x|,|A(x)|)$.
\end{definition}

For instance, if we see a polynomial as a set of monomials, then classical algorithms for interpolating multivariate polynomials from their values are output polynomial~\cite{zippel1990interpolating} as they produce the polynomial in a time proportional to the number of its monomials.

\begin{proposition}\label{prop:output}
 $\OutputP = \EnumP$ if and only if $\P = \NP$.
\end{proposition}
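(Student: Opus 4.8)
The proof splits into the two implications.

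For the direction $\P = \NP \Rightarrow \OutputP = \EnumP$: let $\enum{A} \in \EnumP$, so $A \in \mathrm{PtPb}$, meaning $A(x,y)$ is decidable in polynomial time and the solutions in $A(x)$ have size bounded by some polynomial $q(|x|)$. Consider the language $L = \{(x, y) : \exists y' \in \Sigma^{\le q(|x|)},\ y' > y \text{ (lexicographically)} \text{ and } A(x, y')\}$; this is in $\NP$, hence in $\P$ by hypothesis. The plan is to enumerate $A(x)$ in decreasing lexicographic order by a standard self-reduction: starting from the largest candidate string of length $\le q(|x|)$, repeatedly use the $\P$-algorithm for $L$ together with the $\P$-algorithm for $A$ to binary-search for the next solution below the current one, output it, and continue until no further solution exists. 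Each ``find next solution'' phase costs polynomially many calls to polynomial-time procedures, so it runs in time $\mathrm{poly}(|x|)$; over all $|A(x)|$ solutions plus one final phase that discovers there are none left, the total time is $\mathrm{poly}(|x|) \cdot (|A(x)| + 1)$, which is polynomial in $|x|$ and $|A(x)|$. Hence $\enum{A} \in \OutputP$.

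For the direction $\OutputP = \EnumP \Rightarrow \P = \NP$: it suffices to decide $\SAT$ in polynomial time. Recall from the excerpt that $\enum{SAT}$, the enumeration of satisfying assignments of a $3$-CNF formula, lies in $\EnumP$; by hypothesis it is in $\OutputP$, so there is a machine $M$ and a polynomial $p$ with $T(x, |A(x)|) < p(|x|, |A(x)|)$. Given a formula $\phi$ with $n$ variables, run $M$ on $\phi$ for $p(|\phi|, 0)$ steps. If $M$ halts within this bound having produced no \textsc{output} instruction, then $\phi$ is unsatisfiable; if it produces at least one solution, $\phi$ is satisfiable. The point is that if $\phi$ were satisfiable then $|A(\phi)| \ge 1$, so $M$ must produce its first solution within $T(\phi, 1) \le T(\phi, |A(\phi)|) < p(|\phi|, |A(\phi)|)$; but we cannot yet bound $|A(\phi)|$. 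To fix this, either use the monotonicity $p(|\phi|, k)$ and the fact that $M$ produces its first output strictly before step $p(|\phi|, 1)$ — so simulating $p(|\phi|, 1)$ steps detects satisfiability — which is still a polynomial bound independent of the solution count; if $M$ runs longer than that without output, $\phi$ is unsatisfiable. Thus $\SAT \in \P$ and $\P = \NP$.

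The main obstacle is the second direction: the output-polynomial guarantee only controls the running time as a function of $|A(x)|$, which is itself unknown a priori and potentially exponential, so one must be careful to extract from it a bound on the time to the \emph{first} output that depends only on $|x|$. The clean way is to observe that since $p$ may be taken nondecreasing in its second argument and $T(x,1)\le T(x,|A(x)|) < p(|x|,|A(x)|)$, the worst case over all satisfiable $\phi$ still forces the first output before step $p(|\phi|, 2^{\mathrm{poly}(|\phi|)})$, which is a bad bound — so instead one argues directly: $T(x,1) < p(|x|, |A(x)|)$, and if $|A(x)| \geq 1$ then in particular $M$ outputs something, whereas the negation ``$M$ makes no output in any number of steps'' is equivalent to $|A(x)| = 0$; combined with the fact that $M$ halts immediately after its last output, the total runtime on an unsatisfiable instance is $T(x,0)$-ish and bounded by $p(|x|, 0)$. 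Hence simulating $M$ for $p(|\phi|,0)$ steps suffices, and I would phrase the final argument around this clean case distinction.
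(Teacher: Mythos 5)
Your proof is correct and for the forward direction ($\OutputP = \EnumP \Rightarrow \P = \NP$) it is essentially identical to the paper's: both simulate the output-polynomial machine for the polynomial bound it would take on a zero-solution instance, concluding satisfiability if it has not yet halted. (Both proofs implicitly read the bound $T(x,|A(x)|) < p(|x|,|A(x)|)$ as also constraining the total running time when $|A(x)|=0$; this is the standard convention but worth being explicit about, since $T(x,0)$ is literally zero under the definition in the paper.)

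For the converse ($\P=\NP \Rightarrow \OutputP=\EnumP$), you take a genuinely different route. The paper goes through $\enum{SAT}$: it uses autoreducibility of $\SAT$ to build a flashlight/backtrack-search enumeration of satisfying assignments, then invokes $\EnumP$-completeness of $\enum{SAT}$ under parsimonious reductions to transfer the bound to all of $\EnumP$. You instead work directly with an arbitrary $\enum{A}\in\EnumP$: you observe that the interval-existence predicate ``there is a $y'$ with $A(x,y')$ in a given lexicographic range'' is an $\NP$ language, hence in $\P$ by hypothesis, and use it to binary-search for the next solution, giving a polynomial-delay (hence output-polynomial) enumeration. Your approach is arguably cleaner: it sidesteps the need to check that $\OutputP$ is closed under parsimonious reductions, which the paper uses implicitly and which in fact requires the bijection $g(x)$ of the reduction to be efficiently invertible — a point the paper's definition of parsimonious reduction leaves tacit. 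One small detail: the language $L$ you wrote tests for a solution \emph{above} $y$, whereas decreasing-order enumeration needs ``is there a solution strictly \emph{below} $y$'' (or, more conveniently, the two-sided interval predicate); this is cosmetic and does not affect the argument, since all these variants are $\NP$ languages.
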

\begin{proof}
 Assume $\OutputP = \EnumP$, thus $\enum{SAT}$ is in $\OutputP$. Then on an instance $x$, it can be solved in time bounded by $p(|x|)q(|SAT(x)|)$ where $p$ and $q$ are two polynomials. Let $c$ be the constant term of $q$, if we run the enumeration algorithm for $\enum{SAT}$ and it does not stop before a time $cp(|x|)$, we know there must be a least an element in $SAT(x)$. If it stops before a time $cp(|x|)$, it produces the set $SAT(x)$ therefore we can decide the problem $SAT$ in polynomial time.
 
 Assume now that $\P = \NP$. The problem $SAT$ is autoreducible, that is given a formula $\phi$
 and a partial assignment of its variables $a$, we can decide whether $a$ can be extended 
 to a satisfying assignment by deciding $SAT$ on another instance. Therefore we can decide in polynomial
 time if there is an extension to a partial assignment and by using the classical backtrack search or flashlight method (see for instance~\cite{mary2016efficient}) we obtain an $\OutputP$ algorithm for $\enum{SAT}$, which by completeness of $\enum{SAT}$ for $\EnumP$ yields $\EnumP = \OutputP$.
\end{proof}

The classes $\EnumP$ and $\OutputP$ may be seen as analog of $\NP$ and $\P$ for the enumeration.
Usually an enumeration problem is considered to be tractable if it is in $\OutputP$, especially if its complexity is linear in the number of solutions. The problems in $\OutputP$ are easy to solve when there are few solutions and hard otherwise. We now introduce classes of complexity inside $\OutputP$ to capture the problems which could be considered as classes of tractable problems even when the number of solutions is high.

\subparagraph{The class IncP.}

From now on, a polynomial time precomputation step is always allowed before the start of the enumeration. It makes the classes of complexity more meaningful, especially their fine grained version. It is usually used in practice to set up useful datastructures or to preprocess the instance.

Given an enumeration problem $A$, we say that a machine $M$ enumerates $A$ in {\em incremental time} $f(m)g(n)$ if on every input $x$, $M$ enumerates $m$ elements of $A(x)$ in time $f(m)g(|x|)$ for every $m \leq |A(x)|$.% and runs in time $f(|A(x)|)g(|x|)$.

\begin{definition}[Incremental polynomial time]
A problem $\enum{A} \in \EnumP$ (respectively, in $\Enum{F}$) is in $\IncP_a$ (resp. $\IncP_a^F$) if there is a machine $M$ which solves it in incremental time $cm^an^b$ for $b$ and $c$ constants. Moreover, we define $\IncP = \bigcup_{a \geq 1} \IncP_a$ and $\IncP^F = \bigcup_{a \geq 1} \IncP_a^F$.
\end{definition}

Let $A$ be a binary predicate, $\mathsf{AnotherSol_A}$ is the search problem defined as given $x$ and a set $\mathcal{S}$, find $y \in A(x) \setminus \mathcal{S}$ or answer that $\mathcal{S} \supseteq A(x)$ (see:\cite{phdstrozecki,creignou2017complexity}). The problems in $\IncP$ are the ones with a polynomial search problem: 

\begin{proposition}[Proposition $1$ of \cite{phdstrozecki}]
\label{prop:anothersolincp}
  Let $A$ be a predicate such that $\enum{A} \in \EnumP$. $\mathsf{AnotherSol_A}$ is in $\FP$ if and only if $\enum{A}$ is in $\IncP$.
\end{proposition}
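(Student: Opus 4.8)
The plan is to prove both implications by relating an enumeration algorithm to a solver for $\mathsf{AnotherSol_A}$.

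For the forward direction, suppose $\enum{A} \in \IncP$, say $\enum{A}$ is solved by a machine $M$ in incremental time $cm^an^b$. Given an input $x$ and a set $\mathcal S \subseteq A(x)$, I would run $M$ on $x$ and simply watch the stream of outputs $y_1, y_2, \dots$, skipping any $y_i$ that already lies in $\mathcal S$, until either some $y_i \notin \mathcal S$ is produced — in which case we return it — or $M$ halts, in which case $A(x) \subseteq \mathcal S$ (in fact $A(x) = \mathcal S$ here) and we answer accordingly. The key point is the running time: after $M$ has produced $|\mathcal S| + 1$ distinct solutions we are guaranteed to have seen one outside $\mathcal S$, since $M$ never repeats an output; by the incremental time bound this happens within $c(|\mathcal S|+1)^a |x|^b$ steps, which is polynomial in $|x|$ and $|\mathcal S|$, hence polynomial in the size of the input $(x,\mathcal S)$. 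If $M$ halts before producing $|\mathcal S|+1$ solutions, it has certified $A(x) \subseteq \mathcal S$, and the total time is likewise bounded. So $\mathsf{AnotherSol_A} \in \FP$. (One should note the membership test $y_i \in \mathcal S$ costs only polynomial time since $\mathcal S$ is given explicitly.)

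For the converse, suppose $\mathsf{AnotherSol_A} \in \FP$, solved in time $q(|x| + |\mathcal S|)$ for some polynomial $q$. The enumeration algorithm maintains the set $\mathcal S$ of solutions produced so far, initially empty, and repeatedly calls the $\mathsf{AnotherSol_A}$ oracle on $(x, \mathcal S)$: each call either returns a new $y \in A(x) \setminus \mathcal S$, which we output and add to $\mathcal S$, or reports $A(x) \subseteq \mathcal S$, at which point we halt. Correctness is immediate: every output is a genuine, new element of $A(x)$, and we stop exactly when all have been listed. For the complexity, observe that when we are producing the $(m{+}1)$-st solution the set $\mathcal S$ has size $m$, and — crucially — since $\enum{A} \in \EnumP$ the predicate $A$ is polynomially balanced, so each element of $A(x)$ has size at most $p_0(|x|)$ for a fixed polynomial $p_0$; hence $|\mathcal S| \le m \cdot p_0(|x|)$ as a bit-string. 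Each of the first $m{+}1$ oracle calls therefore costs at most $q(|x| + m\, p_0(|x|))$, and the total time to produce $m$ solutions is bounded by $(m{+}1)\, q(|x| + m\, p_0(|x|))$, which is of the form $f(m) g(|x|)$ with $f$ and $g$ polynomials; expanding the polynomial $q$ this fits the bound $c m^a n^b$ for suitable constants, so $\enum{A} \in \IncP$.

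The main obstacle — and the only place the hypothesis $\enum{A} \in \EnumP$ is really used — is controlling the size of the intermediate set $\mathcal S$ passed to the search oracle in the converse direction: without polynomial balancedness of $A$, the accumulated list of solutions could be superpolynomially large as a string even when there are only polynomially many of them, and then a call to $\mathsf{AnotherSol_A}$, though polynomial in $|(x,\mathcal S)|$, would not be polynomial in $|x|$ and $m$. Everything else is bookkeeping. It is also worth remarking that this argument pins down exactly where $\IncP$ sits: the exponent $a$ in $\IncP_a$ is governed by the degree of the polynomial running time of the search problem together with the balancedness polynomial, so a search problem solvable in linear time yields membership in $\IncP_1$.
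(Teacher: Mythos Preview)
Your proof is correct and follows essentially the same approach as the paper's: in one direction, iterate the $\mathsf{AnotherSol_A}$ solver while accumulating outputs; in the other, simulate the incremental enumerator for long enough to guarantee $|\mathcal S|+1$ outputs, whence by pigeonhole one is new. You supply one piece of reasoning the paper leaves implicit, namely that polynomial balancedness of $A$ is what keeps the bit-length of the accumulated set $\mathcal S$ polynomial in $m$ and $|x|$, which is indeed where the $\EnumP$ hypothesis is used.
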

\begin{proof}
  First assume that $\mathsf{AnotherSol_A}$ is in $\FP$. Given $x$, we can enumerate $A(x)$ using the following algorithm: we start with $\mathcal{S} = \emptyset$ and iteratively add solutions to $\mathcal{S}$ by running $\mathsf{AnotherSol_A}(x,\mathcal{S})$ until no new solution is found, that is, until $\mathcal{S} = A(x)$. The delay between the discovery of two new solutions is polynomial in $|\mathcal{S}|$ and $|x|$ since $\mathsf{AnotherSol_A}$ is in $\FP$. Thus, $\enum{A}$ is in $\IncP$.

  Now assume that $\enum{A}$ is in $\IncP$. That is, we have an algorithm $M$ that given $x$, output $k$ different elements of $A(x)$ in time $c|x|^ak^b$ for $a,b,c$ constants. Given $x$ and $\mathcal{S}$, we solve $\mathsf{AnotherSol_A}(x,\mathcal{S})$ in polynomial time as follows: we simulate $M$ for $c|x|^a|(1+|\mathcal{S}|)^b$ steps. If the algorithm stops before that, then we have completely generated $A(x)$. It is then sufficient to look for $y \in A(x) \setminus \mathcal{S}$ or, if no such $y$ exists, output that $\mathcal{S} \supseteq A(x)$. If the algorithm has not stopped yet, then we know that we have found $|\mathcal{S}|+1$ elements of $A(x)$. At least one of them is not in $\mathcal{S}$ and we return it.

\end{proof}

The class $\IncP$ is usually defined as the class of problems solvable by an algorithm with a delay polynomial in the number of already generated solutions and in the size of the input. This alternative definition is motivated by saturation algorithms, which generates solutions by applying some polynomial time rules to enrich the set of solutions until saturation. There are many saturation algorithms, for instance to enumerate circuits of matroids~\cite{khachiyan2005complexity} or to compute closure by set operations~\cite{mary2016efficient}.

\begin{definition}[Usual definition of incremental time.]
A problem $\enum{A} \in \EnumP$ (respectively in $\Enum{F}$) is in $\UIncP_a$  if there is a machine $M$ which solves it such that for all $x$ and for all $ 0 < t \leq |A(x)|$, $| T(x,t) - T(x,t-1)| < c t^a |x|^b$ for $b$ and $c$ constants. Moreover, we define $\UIncP = \bigcup_{a \geq 1} \UIncP_a$.
 \end{definition}

With our definition, we capture the fact that investing more time guarantees more solutions to be output, which is a bit more general at first sight than bounding the delay because the time between two solutions is not necessarily regular. We will see in Sec.~\ref{sec:amortization} that both definitions are actually equivalent but the price for regularity is to use exponential space.  % where $\IncP_1$ is compared with algorithms having polynomial delay. 

We now relate the complexity of $\IncP$ to the complexity of the class $\TFNP$ introduced in~\cite{megiddo1991total}.
A problem in $\TFNP$ is a polynomially balanced polynomial time predicate $A$ such that for all 
$x$, $A(x)$ is not empty. An algorithm solving a problem $A$ of $\TFNP$ on input $x$ outputs one element of $A(x)$.
The class $\TFNP$ can also be seen as the functional version of $\NP \cap \coNP$.

\begin{proposition}
If $\TFNP = \FP $ if and only if $\IncP = \OutputP$.
\end{proposition}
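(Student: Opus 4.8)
The plan is to prove both implications by going through Proposition~\ref{prop:anothersolincp}, which already identifies $\IncP$ with the class of problems whose search version $\mathsf{AnotherSol_A}$ lies in $\FP$. Since an incremental machine with bound $cm^an^b$ lists all $|A(x)|$ solutions within $c|A(x)|^a|x|^b$ steps and then stops, the inclusion $\IncP\subseteq\OutputP$ is automatic; so in both directions the real work is to compare membership in $\OutputP$ with the $\FP$-solvability of $\mathsf{AnotherSol}$.

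For $\TFNP=\FP\Rightarrow\IncP=\OutputP$, I would start from a problem $\enum{A}\in\OutputP$, fix a machine $M$ and a polynomial $p$ (w.l.o.g.\ non-decreasing in each argument) with $T(x,|A(x)|)<p(|x|,|A(x)|)$, and aim to show $\mathsf{AnotherSol_A}\in\FP$. The key step is to recast $\mathsf{AnotherSol_A}$ as a $\TFNP$ problem: on input $(x,\mathcal{S})$, call $w$ a valid answer if either (i) $w\in A(x)\setminus\mathcal{S}$, or (ii) $w$ encodes a halting run of $M$ on $x$ of total cost at most $p(|x|,|\mathcal{S}|)$ all of whose outputs lie in $\mathcal{S}$. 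Membership of $w$ in this relation is polynomial-time checkable — for (i) because $A\in\mathrm{PtPb}$, for (ii) because under the RAM cost model a computation of total cost $t$ admits a description of size polynomial in $t$, here $t\le p(|x|,|\mathcal{S}|)$ — and the relation is total: if $A(x)\not\subseteq\mathcal{S}$ then (i) is satisfiable, while if $A(x)\subseteq\mathcal{S}$ then $|A(x)|\le|\mathcal{S}|$, so $M$ halts within $p(|x|,|A(x)|)\le p(|x|,|\mathcal{S}|)$ steps having output exactly $A(x)\subseteq\mathcal{S}$, satisfying (ii). Hence this problem lies in $\TFNP=\FP$; running the $\FP$ solver and then testing in polynomial time which of (i), (ii) actually occurred yields an $\FP$ algorithm for $\mathsf{AnotherSol_A}$, and Proposition~\ref{prop:anothersolincp} gives $\enum{A}\in\IncP$. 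As $\enum{A}$ was arbitrary in $\OutputP$, we get $\OutputP\subseteq\IncP$, hence equality.

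For the converse $\IncP=\OutputP\Rightarrow\TFNP=\FP$, I would take an arbitrary $R\in\TFNP$ with witness-length bound $q(|x|)$ and consider the padded enumeration problem $\enum{B}$ defined by $B(x,\langle z,w\rangle)$ holding iff $z\in\{0,1\}^{q(|x|)}$ and $R(x,w)$, so that $B(x)=\{0,1\}^{q(|x|)}\times R(x)$, a set that is non-empty because $R(x)\neq\emptyset$. Then $\enum{B}\in\EnumP$, and it lies in $\OutputP$ essentially for free: since $|B(x)|\ge 2^{q(|x|)}$, the naive algorithm that scans all $w\in\{0,1\}^{q(|x|)}$, tests $R(x,w)$, and on success outputs every pair $\langle z,w\rangle$ runs in time polynomial in $|B(x)|$. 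By hypothesis $\enum{B}\in\IncP$, so some machine, after a polynomial precomputation plus $O(|x|^b)$ further steps, outputs a first element $\langle z_1,w_1\rangle\in B(x)$; then $R(x,w_1)$ holds, so returning $w_1$ solves $R$ in polynomial time and $R\in\FP$.

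I expect the forward direction to be the delicate point: the ``$\mathcal{S}\supseteq A(x)$'' alternative of $\mathsf{AnotherSol}$ is a priori a $\coNP$-type assertion, and the trick is to notice that it becomes polynomially certifiable precisely in the regime where it is true, since there $|A(x)|$ is bounded by the size of the input set $\mathcal{S}$, which makes the $\OutputP$ time bound polynomial in the actual input length and the computation trace a legitimate $\NP$ witness. The converse is comparatively routine once one hits on the padding $\{0,1\}^{q(|x|)}$, whose sole purpose is to inflate the output so that brute force is output-polynomial while every single solution still exposes a witness of $R$.
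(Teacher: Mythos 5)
Your proof is correct and takes essentially the same approach as the paper. In the backward direction both arguments pad a $\TFNP$ problem to make the trivial brute-force enumeration output-polynomial; in the forward direction both construct a $\TFNP$ relation whose totality rests on the same key observation — that when $\mathcal{S}\supseteq A(x)$ the $\OutputP$ time bound $p(|x|,|A(x)|)\le p(|x|,|\mathcal{S}|)$ is polynomial in the actual input — and both then invoke Proposition~\ref{prop:anothersolincp}; the only cosmetic difference is that the paper's relation $D$ uses the dummy witness $\sharp$ and performs the bounded simulation inside the polynomial-time verifier, whereas you package the bounded run itself as the $\NP$-witness, which is equivalent modulo the routine fact that a RAM run of cost $t$ has a trace of size $\mathrm{poly}(t)$.
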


\begin{proof}
Let $A$ be in $\TFNP$ and let $q$ be a polynomial such that if $A(x,y)$ then 
$|y| \leq q(|x|)$. We define $C(x,y\sharp w)$ the predicate which is true if and only if $A(x,y)$ and $|w| \leq q(|x|)$.
Observe that the set $C(x)$ is never empty by definition of $\TFNP$.
Thanks to the padding, there are more than $2^{|w|} = 2^{q(|x|)}$ elements in $C(x)$ for each $y$ such that $A(x,y)$.
Therefore the trivial enumeration algorithm testing every solution of the form $y\sharp w$ is polynomial in the number of solutions, which proves that $\enum{C}$ is in $\OutputP$.

If $\IncP = \OutputP$, we have an incremental algorithm
for $\enum{C}$. In particular, it gives, on any instance $x$, the first solution  $y\sharp w$ in polynomial time.
This is a polynomial time algorithm to solve the $\TFNP$ problem $A$, thus $\TFNP = \FP$.

Now assume that $\TFNP = \FP$ and let $\enum{A}$ be in $\OutputP$. 
We assume w.l.o.g. that the predicate $A$ is defined over $(\{0,1\}^*)^2$ and we define the relation $D((x,S),y)$ which is true if and only if 
\begin{itemize}
 \item either $y \in A(x) \setminus S$,
 \item or $y = \sharp$ and $S \supseteq A(x)$.
\end{itemize}

We show that $D$ is in $\TFNP$. First, observe that the relation $D$ is total by construction. Now, since $\enum{A} \in \OutputP \subseteq \EnumP$, the $y$ such that $A(x,y)$ holds are of size polynomial in $|x|$ which proves that $D$ is polynomially balanced.

It remains to show that one can decide $D((x,S),y)$ in time polynomial in the size of $x$, $S$ and $y$. The algorithm is as follows: if $y \neq \sharp$, then $D((x,S),y)$ holds if and only if $y \in A(x) \setminus S$. Testing whether $y \notin S$ can obviously be done in polynomial time in the size of $y$ and $S$. Now, recall that $\enum{A} \in \EnumP$, thus we can also test whether $y \in A(x)$ holds in polynomial time. 

Now assume that $y = \sharp$. Then $D((x,S),\sharp)$ holds if and only if $S \supseteq A(x)$. By assumption, $A \in \OutputP$, thus we have an algorithm that given $x$, generates $A(x)$ in time $c|x|^a|A(x)|^b$ for constants $a,b,c$. We simulate this algorithm for at most $c|x|^a|S|^b$ steps. If the algorithm stops before the end of the simulation, then we have successfully generated $A(x)$ and it remains to check if $S \supseteq A(x)$ which can be done in polynomial time. Now, if the algorithm has not stopped after having simulating $c|x|^a|S|^b$ steps, it means that $|A(x)| > |S|$. Thus, $S \nsupseteq A(x)$ and we know that $D((x,S),\sharp)$ does not hold.

% Recall that one can test in time polynomial whether $y \in A(x)$ since $\enum{A} \in \EnumP$.
% Now given $((x,S),y)$ we prove that we can test whether $D((x,S),y)$ holds in polynomial time in $|(x,S)|$.
% If $y = \sharp\sharp$ then one can test whether $ S \subsetneq A(x)$ by testing for each element $s\in S$
% if $s \in A(x)$ in polynomial time. 
% Since $\enum{A} \in \OutputP$, there is an algorithm which enumerates the elements of $A(x)$ in time 
% $c|A(x)|^a|x|^b$ with $a,b,c$ constants. 
% If $y = \sharp$, one runs the output polynomial algorithm for a time $c|S|^a|x|^b$. If the algorithm does not stops
% it means that $|A(x)| > |S|$ thus $D((x,S),y)$ does not hold. If the algorithm stops it outputs $A(x)$
% and we only have  to check whether it is equal to $S$. Finally if $y \in \{0,1\}^*$, we test whether 
% $y \in A(x)$ in time polynomial in $|x|$ and then we verify that $y \notin S$ in time $|S|$. 

We have proved that $D \in \TFNP$. Since we have assumed that $\TFNP = \FP$ we can, given 
$(x,S)$, find $y$ such that $y \in A(x) \setminus S$ or decide there is none. In other words 
the problem $\mathsf{AnotherSol_A}$ is in $\FP$ and it implies that $\enum{A} \in \IncP$ by Proposition~\ref{prop:anothersolincp}.
\end{proof}

This is yet a new link between complexity of enumeration and another domain of computer science,
namely the complexity of total search problem. It is interesting since enumeration complexity is often understood only
by relating it to decision complexity, as in Prop.~\ref{prop:output}.
Moreover recent progress on the understanding of $\TFNP$ may help us to understand the class $\IncP$. For instance, it has been proven that reasonable assumptions such as the existence of one way functions are enough to imply $\FP \neq \TFNP$~\cite{hubacek2016journey} and thus $\IncP \neq \OutputP$.

Observe that without the requirement to be in $\EnumP$, incremental polynomial time and output polynomial time are separated unconditionally.
\begin{proposition}
 $\IncP^F \neq \OutputP^F$.
\end{proposition}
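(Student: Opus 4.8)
The plan is to exhibit a single predicate $A$ with $A(x)$ finite for every $x$ such that $\enum{A}$ is in $\OutputP^F$ for trivial reasons (there is at most one solution) but cannot be in $\IncP^F$, because outputting that one solution requires solving an uncomputable or simply super-polynomial problem. The key observation is that $\OutputP^F$ and $\IncP^F$ drop the $\EnumP$ requirement, so the solutions need not be polynomial-time checkable nor polynomially balanced; this removes the obstruction that forced the earlier separations to be conditional. Concretely, I would use a diagonalization/time-hierarchy argument: pick an extremely hard (e.g.\ $\mathsf{EXPTIME}$-hard, or even undecidable) language $L$ and define $A(x,y)$ to hold iff ($y = 1$ and $x \in L$) or ($y = 0$ and $x \notin L$). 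Then $A(x)$ is a singleton for every $x$, so any machine that prints its unique element is automatically in $\OutputP^F$ — wait, that is the point to be careful about: producing the element \emph{is} the hard part, so this particular $A$ is not in $\OutputP^F$ either.

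So the correct construction inserts padding to make the \emph{total} time cheap relative to the output size while keeping the \emph{first} solution expensive. I would let $A(x)$ consist, whenever $x \in L$, of all strings of length $\le 2^{|x|}$ (an exponentially large but input-bounded set, so $A$ is finite), and let $A(x) = \{0\}$ when $x \notin L$, where $L$ is chosen $\mathsf{EXPTIME}$-complete under polynomial-time reductions. Enumerating $A(x)$: run in parallel (dovetail) the $\mathsf{EXPTIME}$ decision procedure for $x \in L$; as soon as it is clear $x \in L$, dump all $2^{O(|x|)}$ strings, which takes time polynomial in the output size $|A(x)| = 2^{O(|x|)}$; if it turns out $x \notin L$, output $0$ — but this last branch is the costly one and it has output size $1$, so it is \emph{not} output-polynomial. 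The fix is symmetry: put the padded exponential set on \emph{both} branches, e.g.\ $A(x) = \{0w : |w|\le 2^{|x|}\} \cup \{1w : |w| \le 2^{|x|},\ x\in L\}$. Now $|A(x)| \ge 2^{2^{|x|}}$ always, the decision ``$x \in L$?'' can be resolved within time $2^{O(|x|)} \ll |A(x)|$, hence $\enum{A} \in \OutputP^F$; but the \emph{first} solution whose first symbol is $1$ (equivalently, deciding after polynomially many solutions whether any $1$-prefixed solution exists) encodes $x \in L$, and by the time hierarchy theorem no incremental-time (hence polynomial-in-$m$-and-$|x|$) machine can produce the first $m = \mathrm{poly}(|x|)$ solutions and have them reveal membership in an $\mathsf{EXPTIME}$-hard language. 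Thus $\enum{A} \notin \IncP^F$.

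The steps, in order: (1) fix $L$ to be $\mathsf{EXPTIME}$-complete, with decision time $2^{n^k}$; (2) define $A$ as above and verify $A(x)$ is finite and its size is a fixed tower function of $|x|$; (3) give the output-polynomial algorithm, bounding its running time by dovetailing the decision of $L$ (cost $2^{O(|x|)}$) against the enumeration of the doubly-exponential padded set (cost linear in $|A(x)|$), so total time $\le \mathrm{poly}(|x|,|A(x)|)$, placing $\enum{A}$ in $\OutputP^F$; (4) suppose for contradiction $\enum{A} \in \IncP_a^F$, so there is a machine printing the first $m$ solutions in time $c\,m^a|x|^b$; choose $m$ a large enough polynomial in $|x|$ that the set of the first $m$ outputs must, by a counting/pigeonhole argument on the structure of $A(x)$, determine whether $A(x)$ contains a $1$-prefixed string; this yields a $\mathrm{poly}(|x|)$-time decision procedure for $L$, contradicting the time hierarchy theorem; (5) conclude $\IncP^F \subsetneq \OutputP^F$, hence $\IncP^F \ne \OutputP^F$.

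The main obstacle I anticipate is step (4): making precise why ``the first $m$ solutions reveal membership in $L$'' — a naive enumeration could output only $0$-prefixed strings for a very long time and postpone the informative ones. This is exactly why the padding must be arranged so that \emph{the machine is forced} to interleave, or alternatively one should use a cleaner device: define $A(x)$ so that the \emph{lexicographically relevant} information appears early, e.g.\ make the single bit ``$x\in L$'' the \emph{first} coordinate of \emph{every} solution (all solutions of the form $b\,w$ with $b = [x \in L]$ fixed), and the padding $w$ ranging over all strings of length $\le 2^{|x|}$. Then $A(x)$ has size $2^{O(2^{|x|})}$ always, any single solution reveals $[x\in L]$, enumeration is still output-polynomial by dovetailing, but the \emph{first} solution — which any $\IncP^F$ machine must produce in time $\mathrm{poly}(|x|)$ — decides an $\mathsf{EXPTIME}$-hard problem in polynomial time, the desired contradiction. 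With this cleaner construction step (4) becomes immediate and the only real content is the routine verification in step (3).
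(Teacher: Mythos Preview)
Your final construction --- encode the bit $[x\in L]$ as the first symbol of \emph{every} solution and then pad so that $|A(x)|$ dominates the time to decide $L$ --- is correct and is exactly the paper's approach; the paper simply uses a leaner pad (the $2^{|x|}$ indices $0\le i<2^{|x|}$ rather than all strings of length $\le 2^{|x|}$) and dispenses with dovetailing by deciding $L$ upfront before listing. Your earlier exploratory variants had the problems you yourself diagnosed, but the final version matches the paper's proof.
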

\begin{proof}
Choose any $\EXP$-complete decision problem $L$ and let $A$ be the predicate such that $A(x,y)$ holds if and only if $x = 0\sharp i$ if $x \in L$ or $1\sharp i$ if $x \notin L$ with $0 \leq i  < 2^{|x|}$.
 Therefore $\enum{A}$ is easy to solve in linear total time, but since $\EXP \neq \P$ we cannot produce 
 the first solution in polynomial time and thus $\enum{A}$ is not in incremental polynomial time. 
\end{proof}

\subparagraph{The class DelayP.}

We now define the polynomial delay which by definition is a subclass of $\IncP_1$.
In Sec.~\ref{sec:amortization} we study its relationship with $\IncP_1$, while in Sec.\ref{sec:hier} we prove its separation from $\IncP$. 

\begin{definition}[Polynomial delay]
A problem $\enum{A} \in \EnumP$ (respectively in $\Enum{F}$) is in $\DelayP$ (resp. in $\DelayP^F$) if there is a machine $M$ which solves it such that for all $x$ and for all $ 0 < t \leq |A(x)|$, $| T(x,t) - T(x,t-1)| < C|x|^c$ for $C$ and $c$ constants.
 \end{definition}
 
Observe that, by definition, $\DelayP = \UIncP_0$.

\section{Space and regularity of enumeration algorithms}\label{sec:amortization}

The main difference between $\IncP_1$ and $\DelayP$ is the regularity of the delay between two solutions. In several algorithms, for instance to generate maximal cliques~\cite{johnson1988generating}, an exponential queue is used to store results, which are then output regularly to guarantee a polynomial delay. This is in fact a general method which can be used to prove that  $\IncP_1 = \DelayP$ and, more generally, $\IncP_{a+1} = \UIncP_a$. 

\begin{proposition}\label{prop:UInca}
For every $a \in \N$,  $\IncP_{a+1} = \UIncP_{a}$.
\end{proposition}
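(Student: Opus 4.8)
The plan is to prove both inclusions separately, the easy one being $\UIncP_a \subseteq \IncP_{a+1}$ and the interesting one being $\IncP_{a+1} \subseteq \UIncP_a$ via the classical queue (amortization) trick. For the easy direction, suppose $\enum{A} \in \UIncP_a$ witnessed by a machine $M$ with delay bounded by $ct^a|x|^b$ before the $t$-th output. Then the time to produce the first $m$ solutions is $\sum_{t=1}^m ct^a|x|^b \le c\, m^{a+1}|x|^b$, so $M$ witnesses $\enum{A}\in\IncP_{a+1}$ directly; no modification of the algorithm is needed. (One has to be slightly careful that the precomputation time and the cost of the final "stop" step are absorbed into the polynomial factors, but this is routine.)

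For the hard direction, let $\enum{A}\in\IncP_{a+1}$ via a machine $M$ that outputs $m$ solutions within time $cm^{a+1}|x|^b$. I would build a new machine $M'$ that simulates $M$ in the background while maintaining a FIFO queue $Q$ of solutions produced by $M$ but not yet output by $M'$. The key is to choose a schedule: after $M'$ has itself output $k$ solutions, it will only output the $(k+1)$-st once it has simulated $M$ for enough steps to be sure $M$ has produced at least $k+1$ solutions (or has halted). Concretely, $M'$ runs $M$ for $c(k+1)^{a+1}|x|^b$ total simulated steps before releasing solution $k+1$ from the front of $Q$; since $M$ in $\IncP_{a+1}$ is guaranteed to have produced $\ge k+1$ solutions within that budget, $Q$ is nonempty when we need it. The delay of $M'$ between its $k$-th and $(k+1)$-st outputs is then at most the difference $c(k+1)^{a+1}|x|^b - ck^{a+1}|x|^b$, which by the mean value theorem (or direct binomial expansion) is $O\big((k+1)^a |x|^b\big)$, giving exactly the $\UIncP_a$ bound. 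When $M$ halts, $M'$ flushes whatever remains in $Q$ (each flush step has delay $0$, well within budget) and then halts, so $M'$ is correct and produces exactly $A(x)$ without duplicates.

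The main obstacle — and the point the proposition is implicitly flagging — is bookkeeping the simulation overhead so that it does not blow up the exponent. Simulating one step of $M$ costs $O(1)$ on a RAM, and each released solution must be detected, enqueued, and later dequeued and written out; I would argue each of these is polynomial in $|x|$ and in the size of the solution (which is polynomial in $|x|$ since $\enum{A}\in\EnumP$), so the per-solution overhead is absorbed into the $|x|^b$ factor after adjusting $b$. A secondary subtlety is synchronizing the "simulated step counter" with the release schedule: $M'$ must know when $M$'s simulated time has crossed the threshold $c(k+1)^{a+1}|x|^b$, which requires tracking the accumulated simulated cost (including the linear-cost arithmetic instructions), but this counter is itself an integer of polynomially many bits and updating it is cheap. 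The one thing worth stating explicitly is the space cost: $Q$ can grow to size $\Theta(|A(x)|)$, so $M'$ uses space proportional to the output size — exponential in general — which is exactly the "price for regularity" the paper mentions and the reason this does not immediately settle $\IncP_1 = \DelayP$ with polynomial space.
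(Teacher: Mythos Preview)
Your proposal is correct and follows essentially the same approach as the paper: summing the delays for the inclusion $\UIncP_a \subseteq \IncP_{a+1}$, and using the queue/amortization trick with release thresholds at $c(k{+}1)^{a+1}|x|^b$ simulated steps for the converse, yielding a delay of order $(k{+}1)^{a+1}-k^{a+1}=O(k^a)$ times the input polynomial. The paper's proof is slightly more terse on the bookkeeping but makes the same points, including that the solution size $s$ is polynomial in $|x|$ because $\enum{A}\in\EnumP$ and that the queue may require exponential space.
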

\begin{proof}
 Let $\enum{A} \in \UIncP_a$, then there is an algorithm $I$ and constants $C$ and $c$ such that $I$ on input $x$ produces $k$ solutions in time bounded by 
 \[ 
 \begin{aligned}
 \sum_{i=0}^k C  |x|^c  i^a & = C|x|^c  (\sum_{i=0}^k i^a) \\
 & \leq C|x|^c (k+1)k^a \\
 & \leq 2C  |x|^c  k^{a+1}.
 \end{aligned}
 \]
Thus $\enum{A} \in \IncP_{a+1}$.

Now let $\enum{A} \in \IncP_{a+1}$, then there is an algorithm $I$ which on an instance of size $n$, produces $k$ solutions in time bounded by $k^{a+1}p(n)$ where $p$ is a polynomial. 

We construct an algorithm $I'$ which solves $\enum{A}$ with delay $O(p(n)q(k)+s)$ between the $k^\th$ and the $(k+1)^\th$ output solution, where $s$ is a bound on the size of a solution and $q(k) = (k+1)^{a+1}-k^{a+1}$. The algorithm $I'$ simulates $I$ together with a counter $c$ that is incremented at each step of $I$ and a counter $k$ which is initially set to $1$. Each time $I$ outputs a solution, we append it to a queue $\ell$ instead. Each time $c$ reaches $p(n)k^{a+1}$, the first solution of $\ell$ is output, removed and $k$ is incremented.

It is easy to see that during the execution of $I'$, $k-1$ always contains the number of solutions that have been output by $I'$ so far. Thus when $c$ reaches $p(n)k^{a+1}$, $I$ is guaranteed to have found $k$ solutions and $I'$ has only output $k-1$ of them, thus $\ell$ is not empty or it is the end of the execution of $I$. Moreover, the time elapsed between $I'$ outputs the $k^\th$ and the $(k+1)^\th$ solutions is the time needed to update the counters, plus the time needed to write a solution which is linear in $s$ plus  $(k+1)^{a+1}p(n)-k^{a+1}p(n) = q(k)p(n)$. Thus, the delay of $I'$ between the $k^\th$ and the $(k+1)^\th$ output solution is $O(p(n)q(k)+s)$. Since $\enum{A} \in \IncP_{a+1}$, we also have $\enum{A} \in \EnumP$, thus $s$ is polynomial in $n$. Moreover $q(k) = O(k^a)$. That is, $\enum{A} \in \UIncP_a$. 
\end{proof}

By choosing $a=0$ in Proposition~\ref{prop:UInca}, we directly get the interesting following result:
\begin{corollary}\label{prop:Inc1}
 $\IncP_1 = \DelayP$ and $\IncP = \UIncP$.
\end{corollary}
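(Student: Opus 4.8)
The plan is to read off Corollary~\ref{prop:Inc1} from Proposition~\ref{prop:UInca} by specialising the exponent and then taking unions; there is no new combinatorial content, only index bookkeeping together with two facts that hold directly from the definitions.

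First I would prove $\IncP_1 = \DelayP$. Instantiating Proposition~\ref{prop:UInca} at $a = 0$ gives $\IncP_1 = \UIncP_0$, and the observation recorded right after the definition of polynomial delay states $\DelayP = \UIncP_0$ (having delay bounded by $C|x|^c$ is exactly being $\UIncP$ with the solution-count exponent equal to $0$). Composing the two identities yields $\IncP_1 = \DelayP$.

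For $\IncP = \UIncP$ I would take unions. By definition $\IncP = \bigcup_{a \geq 1}\IncP_a$; reindexing via $a \mapsto a+1$ and applying Proposition~\ref{prop:UInca} termwise turns this into $\bigcup_{a \geq 0}\UIncP_a = \UIncP_0 \cup \bigcup_{a \geq 1}\UIncP_a = \UIncP_0 \cup \UIncP$. Since in the definition of $\UIncP_a$ one always has $t \geq 1$, any delay bound $c\,t^0|x|^b$ is also a delay bound $c\,t^1|x|^b$, so $\UIncP_0 \subseteq \UIncP_1$ and the $a=0$ term is absorbed, giving $\IncP = \UIncP$. I do not expect a genuine obstacle here: the only subtlety is keeping the two index ranges straight — the union defining $\IncP$ starts at $a = 1$ while Proposition~\ref{prop:UInca} naturally yields $\UIncP_a$ starting at $a = 0$ — and explicitly invoking the definitional identities $\DelayP = \UIncP_0$ and $\UIncP_0 \subseteq \UIncP_1$ so that the corollary is self-contained.
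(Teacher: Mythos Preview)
Your argument is correct and matches the paper's approach: the paper simply states that the corollary follows by choosing $a=0$ in Proposition~\ref{prop:UInca}, and you have spelled out exactly this (plus the union over $a$ for the second claim) with a little extra care about the index ranges.
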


An inconvenience of Proposition~\ref{prop:UInca} is that the method used to go from our notion of incremental polynomial time to the usual notion of incremental time may blow up the memory. In practice, incremental delay is relevant if we also use only polynomial space. This naturally raises the question of understanding the relationship between $\IncP_{a+1}$ and $\UIncP_a$ when the space is required to be polynomial in the size of the input. 

As the more relevant classes in practice are $\DelayP$ and $\IncP_1$, we are concretely  interested in the following question: does every problem in $\IncP_1$ with polynomial space also have an algorithm in $\DelayP$ with polynomial space? Unfortunately, no classical assumptions in complexity theory seem to help for separating these classes nor were we able to prove the equality of both classes. The rest of this section is dedicated to particular $\IncP_1$ algorithms where the enumeration is sufficiently regular to be transformed into $\DelayP$ algorithm without blowing up the memory.

An algorithm $I$ is \emph{incremental linear} if there exists a polynomial $h$ such that on any instance of size $n$, it produces $k$ solutions in time bounded by $kh(n)$. We call $h$ the \emph{average delay} of $I$. By definition, a problem $\enum{A} \in \EnumP$ is in $\IncP_1$ if and only if there exists an incremental linear algorithm solving $\enum{A}$. 

Let $I$ be an incremental linear algorithm. Recall that $T(I,x,i)$ is number of steps made by $I$ before outputting the $i^{\text{th}}$ solution.
To make notations lighter, we will write $T(i)$ since $x$ and $I$ will be clear from the context. Consider a run of $I$ on the instance $x$, 
we will call $m_i$ an encoding of $i$, the memory of $I$ and its state at the time it outputs the $i^{\text{th}}$ solution. We say that the index $i$ is a \emph{$p$-gap} of $I$ if $T(i+1) - T(i) > p(|x|)$. If $I$ has no $p$ gaps for some polynomial $p$, it has polynomial delay $p$. We now show that when the number of large gaps is small, we can turn an incremental linear algorithm into a polynomial delay one, by computing shortcuts in advance.

\begin{proposition}
 Let $\enum{A} \in \IncP_1$ and $I$ be an incremental linear algorithm for $\enum{A}$ using polynomial space. 
 Assume there are two polynomials $p$ and $q$ such that for all instances $x$ of size $n$, there are at most $q(n)$ $p$-gaps
 in the run of $I$, then $\enum{A} \in \DelayP$.
\end{proposition}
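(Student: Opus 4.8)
The plan is to precompute, during the polynomial-time preprocessing phase, the locations and associated memory snapshots of all the $p$-gaps, so that during the enumeration proper we can "jump over" each gap instantaneously rather than paying its cost online. Concretely, I would first run $I$ to completion once — this takes time $|A(x)| \cdot h(n)$, which is not polynomial in $n$, so this cannot literally be done in preprocessing; instead the right move is to interleave a simulation of $I$ with the output. Let me restructure: the new algorithm $I'$ runs $I$ step by step, maintaining a counter of $I$'s steps and a count $i$ of how many solutions $I$ has output so far. Whenever $I$ outputs its $i$-th solution, $I'$ does not output it immediately but instead measures the gap $T(i+1)-T(i)$ by continuing to simulate $I$; the key point is that there are only $q(n)$ indices $i$ where this gap exceeds $p(n)$, so the total "extra" work spent straddling gaps is at most $q(n)$ times (the size of one gap). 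But a single gap can be exponentially large, so this still does not obviously give polynomial delay.

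The correct idea, which I expect the authors to use, is a \emph{two-pass / lookahead} scheme: $I'$ keeps its simulation of $I$ running \emph{ahead} of the solutions it has already output, buffering solutions in a queue, exactly as in the proof of Proposition~\ref{prop:UInca}, but it only needs to stay ahead by enough to cover the next gap. Since there are at most $q(n)$ gaps and $|A(x)|$ solutions total, and since the total running time of $I$ up to $k$ solutions is $kh(n)$, the \emph{average} gap is $h(n)$; amortized over the at most $q(n)$ large gaps, one shows that the simulation of $I$ can be run "fast enough" — at a rate of $O\big(h(n) + p(n)\big)$ simulation-steps per output — to guarantee the queue is never empty when a solution is due. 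More precisely: after $I'$ has output $k$ solutions it will have simulated $I$ for at least $T(k+1+q(n))$ steps (say), which by incremental linearity is at most $(k+1+q(n))h(n)$; the number of simulation steps needed to advance the buffer past all remaining gaps is bounded using the $q(n)$-bound on the number of $p$-gaps and the fact that between gaps the delay of $I$ is at most $p(n)$. Balancing these gives a delay polynomial in $n$, and since $I$ uses polynomial space and the queue holds at most $q(n)+1$ solutions each of polynomial size, $I'$ also uses polynomial space.

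I would organize the write-up as: (1) define $I'$ as a simulation of $I$ with a solution queue $\ell$ and a step-budget that is released at a rate $r(n) := h(n) + p(n)$ (or a similar polynomial) per output; (2) prove the invariant that whenever $I'$ must produce its $k$-th output, either $\ell$ is non-empty or $I$ has halted, by comparing the number of simulation steps performed ($\geq$ roughly $k\cdot r(n)$, minus startup) against $T(k)$ plus the total excess contributed by the at most $q(n)$ large gaps (each of which contributes at most its own length, and the sum of all gap lengths is at most the total running time $\leq (|A(x)|)h(n)$ — here one must be careful, and I suspect the cleaner bound is: non-gap steps contribute $\leq p(n)$ each and there are $\leq |A(x)|$ of them, while gap steps number $\leq q(n)$ gaps but this needs the global time bound, so the honest estimate is that reaching solution $k+j$ takes $\leq (k+j)h(n)$ steps and we only ever need $j \leq q(n)+1$ of lookahead); (3) conclude the delay is $O(r(n)+s)$ with $s$ polynomial, and the space is polynomial.

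The main obstacle, as flagged above, is handling the fact that an individual $p$-gap may be exponentially long: one cannot afford to "wait out" a gap online, so the amortization must genuinely exploit both the \emph{global} linear bound $kh(n)$ on total time (which caps how much lookahead is ever needed: after $|A(x)|$ solutions $I$ has run for $\leq |A(x)|h(n)$ steps total) \emph{and} the \emph{local} bound that the number of large gaps is only $q(n)$ (so the buffer never needs to hold more than $q(n)+1$ solutions). Getting the bookkeeping right so that the per-output released budget $r(n)$ is simultaneously (a) large enough that the simulation never falls behind across the $\leq q(n)$ gaps and (b) a fixed polynomial in $n$, is the crux; I expect the authors to release budget at rate $h(n)\cdot(q(n)+1) + p(n)$ or similar, absorbing the worst case where all gaps are clustered together.
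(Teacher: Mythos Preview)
Your very first instinct --- precompute memory snapshots at the $p$-gaps so the enumerator can jump over them --- is exactly what the paper does, and you were wrong to abandon it. You correctly observe that this cannot be done entirely in a polynomial preprocessing phase; the paper's fix is simply to compute these snapshots \emph{online} by running a second, faster copy $I_2$ of $I$ in parallel with the outputting copy $I_1$. Concretely, for each step of $I_1$ one simulates $2h(n)$ steps of $I_2$; whenever $I_2$ detects a $p$-gap at index $i$ it stores the pair $(i,m_{i+1})$. Because $I$ is incremental linear, $T(i+1)\le (i+1)h(n)$, so at the moment $I_2$ finishes crossing the gap, $I_1$ has executed at most $(i+1)/2$ steps and in particular has not yet produced solution $i$. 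Hence when $I_1$ later outputs solution $i$, the shortcut $(i,m_{i+1})$ is already available and $I_1$ jumps its state to $m_{i+1}$. Only $q(n)$ snapshots are ever stored, each of polynomial size since $I$ uses polynomial space, and $I_1$ now has no $p$-gaps; the resulting delay is $O(p(n)h(n))$.

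The queue-based scheme you ultimately propose does \emph{not} achieve polynomial space, and your assertion that ``the queue holds at most $q(n)+1$ solutions'' is unjustified and false. If you release a fixed budget $r(n)\ge h(n)$ per output (any of the rates you suggest), then after $k$ outputs the forward simulation has run for $k\,r(n)$ steps; by incremental linearity it has therefore already found at least $k\,r(n)/h(n)\ge k$ solutions, but there is nothing preventing it from having found far more --- indeed it may have found $\Theta(k\,r(n))$ solutions if $I$ emits them one per step early on. The queue then holds $\Theta(k)$ or more solutions, which is exponential in $n$. Capping the queue at $q(n)+1$ by pausing the forward simulation does not help either: refilling the queue after an output then costs $T(k{+}q(n){+}1)-T(k{+}q(n))$, which is exactly a single delay of $I$ and may itself be an exponentially long gap. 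The missing idea is that the outputting process must \emph{itself} be a simulation of $I$ (so that it regenerates solutions rather than storing them) and must use the snapshots from the fast copy to skip gaps; that is precisely the difference between a solution buffer and a snapshot buffer, and it is what keeps the space polynomial.
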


 \begin{proof}
 Since $I$ is incremental linear it has a polynomial average delay that we denote by $h$.
 We run in parallel two copies of the algorithm $I$ that we call $I_1$ and $I_2$. When $I_1$ simulates one computation step of $I$, $I_2$ simulates $2h(n)$ computation steps of $I$. Moreover $I_2$ counts the number of consecutive steps without finding a new solution so that it detects
 $p$-gaps. When it detects such a gap, a pair $(i,m_{i+1})$ is stored where $i$ is the index of the last solution before the gap and $m_{i+1}$ is the description of the machine when it outputs the $(i+1)^{\text{th}}$ solution. Since there are at most $q(n)$ $p$-gaps and because $I$ uses polynomial space, the memory used by $I_2$ is polynomial. 
When $I_1$ outputs a solution of index $i$ and that $(i,m_{i+1})$ was stored by $I_2$, its state and memory is changed to $m_{i+1}$. 
Assume $I_2$ finds a gap at index $i$, then because $I$ is incremental linear, we have $T(i+1)- T(i) < (i+1)h(n)$.
Therefore $I_1$ at the same time has done at most $\frac{i+1}{2}$ computation steps and thus has not yet seen the $i^{\text{th}}$ solution, 
which proves that the algorithm works as described. In that way, $I_1$ will always generate solutions with delay less than $p(n)h(n)$ because $I_1$ has no $p$-gaps by construction, and each of its computation steps involves $h(n)$ computation steps of $I_2$.  
 \end{proof}

We can prove something more general, by requiring the existence of a large interval of solutions without $p$-gaps
 rather than bounding the number of gaps. It captures more cases, for instance an algorithm which outputs an exponential number of solutions at the beginning without gaps and then  has a superpolynomial number of gaps. The idea is to compensate for the gaps by using the dense parts of the enumeration.

\begin{proposition}
 Let $\enum{A} \in \IncP_1$ and $I$ be an incremental linear algorithm for $\enum{A}$ using polynomial space. 
 Assume there are two polynomials $p$ and $q$ such that for all $x$ of size $n$, and for all $k \leq |A(x)|$
 there exists  $a < b \leq k$ such that $b - a > \frac{k}{q(n)}$ and there are no $p$-gaps between the $a^\th$ and the $b^\th$ solution. Then $\enum{A} \in \DelayP$.
\end{proposition}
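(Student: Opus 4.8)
The idea is to generalize the previous proposition by replacing "bound the number of gaps globally" with "use the guaranteed dense stretch to amortize against the gaps seen so far". As before, $I$ is incremental linear with polynomial average delay $h$, and I run two synchronized copies $I_1$ and $I_2$ of $I$: for every step of $I_1$, the fast copy $I_2$ performs $r(n)$ steps of $I$, where $r$ is a polynomial to be fixed (something like $r = 2q\cdot(\text{slack factor})$ will do). $I_2$ scans ahead, detects $p$-gaps, and whenever it finds one at index $i$ it stores the pair $(i, m_{i+1})$ so that $I_1$, upon reaching index $i$, can jump past the gap by overwriting its state with $m_{i+1}$. The output discipline of $I_1$ is: at each of its steps, simulate one step of $I$; if that produces a solution, output it; whenever $I_1$ is blocked by a $p$-gap at an index for which $I_2$ has already recorded a shortcut, teleport. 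Thus $I_1$ never actually waits through a $p$-gap, and between two consecutive solutions it does at most $p(n)$ steps of $I$, each costing $r(n)$ steps of $I_2$, giving delay $O(p(n)r(n)h(n))$ — polynomial.

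The crux is the bookkeeping argument that $I_2$ is always far enough ahead, so that every $p$-gap reached by $I_1$ has already been recorded, and that the memory used by $I_2$ stays polynomial. For the memory bound: at the moment $I_1$ has output its $k$-th solution, I claim $I_2$ has stored at most $q(n)$ unconsumed shortcuts. Here is where the dense-interval hypothesis enters. Suppose for contradiction that $I_2$ has, at some point in the run, more than $q(n)$ stored pairs waiting to be consumed by $I_1$; let $k$ be the index of the latest solution $I_1$ has output so far, so $I_1$ still has to pass through more than $q(n)$ recorded gaps, all at indices $\le$ the current frontier of $I_2$. By the hypothesis applied to this $k$ (or to the current frontier index), there is an interval $[a,b]$ with $b-a > k/q(n)$ free of $p$-gaps among the first $k$ solutions — and more generally, the total length of gap-free runs forces the number of gaps among the first $m$ solutions to be sub-linear in $m$; combined with $T(m) \le m\,h(n)$, the total time "spent inside gaps" among the first $m$ solutions is at most (number of gaps)$\cdot$(large individual gap lengths), but these sum to less than $T(m) \le m h(n)$, which bounds (number of gaps) $\le$ something like $q(n)$-ish after dividing by $p(n)$. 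I will turn this into a clean invariant: the number of $p$-gaps with index $\le m$ is $O(q(n))$ for all $m$. Actually the hypothesis gives this directly by a telescoping/peeling argument — repeatedly extract a gap-free block of relative length $1/q(n)$, so after $O(q(n)\log(\cdot))$... no: more carefully, between index $b$ of one extracted block and the top, reapply; since each application kills a constant fraction, $O(q(n))$ blocks suffice, hence $O(q(n))$ gaps in the complement. That bounds the stored pairs, hence the space, by a polynomial, since $I$ uses polynomial space so each $m_i$ has polynomial size.

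With the space bound in hand, the speed argument mirrors the previous proof: I must check $I_2$ reaches (and records) a $p$-gap at index $i$ strictly before $I_1$ reaches index $i$. Since $I$ is incremental linear, $T(i) \le i\,h(n)$, so after $I_1$ has simulated $t$ steps of $I$, $I_2$ has simulated $r(n)\,t$ steps, and choosing $r(n) \ge 2q(n)h(n)$ (or whatever the peeling argument demands) guarantees $I_2$'s frontier index stays a factor ahead of $I_1$'s — concretely, by the time $I_1$ is about to block at index $i$, $I_2$ has already output (internally) index $i+1$ and stored the shortcut. One subtlety I expect to be the main obstacle: $I_1$ may itself be teleporting, so its "number of $I$-steps simulated" is not monotone in the solution index; I need to argue that teleporting only ever *reduces* $I_1$'s workload, so the bound $T(i) \le i h(n)$ on the un-teleported run still dominates. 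This requires noting that $I_1$'s trajectory is a subsequence of $I$'s trajectory with gap-segments excised, and along that subsequence the step-count to reach the $i$-th solution is at most $T(i)$. Once that monotonicity-type fact is pinned down, the delay bound $O(p(n)r(n)h(n))$ follows and $\enum{A}\in\DelayP$.
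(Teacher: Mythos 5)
Your approach has a genuine gap: you try to carry over the shortcut-storage strategy from the preceding proposition, but that strategy requires the number of $p$-gaps to be polynomially bounded, and the present hypothesis does not give that. The paper explicitly motivates this proposition by the case of ``an algorithm which outputs an exponential number of solutions at the beginning without gaps and then has a superpolynomial number of gaps''; this is consistent with the assumption, since it only guarantees, for each prefix length $k$, \emph{one} gap-free interval of relative length $1/q(n)$ within the first $k$ solutions. Your peeling argument does not work because the hypothesis applies only to prefixes: once you excise a dense block $[a,b]$ from the first $k$ solutions, the remainder is not a prefix and you cannot reapply the hypothesis to it. Your time-budget argument also fails: from $T(m) \leq m\,h(n)$ and each gap costing more than $p(n)$ you only get that the number of gaps among the first $m$ solutions is at most $m\,h(n)/p(n)$, which is linear in $m$, not $O(q(n))$. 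So $I_2$ may have to store exponentially many pairs $(i,m_{i+1})$, blowing up the memory, which is exactly what the polynomial-space requirement forbids.

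The paper's proof avoids this by never storing a list of shortcuts. The explorer $\Ex$ keeps only a single \emph{stock}, namely the boundaries $m_a, m_b$ of the largest gap-free interval of fresh solutions it has seen. The enumerator $\En$ does not teleport over gaps; instead, upon encountering a $p$-gap it switches to a \emph{filling mode} in which it interleaves, at a fixed rate, steps of its original simulation with a second simulation started at $m_a$, outputting the stock's solutions during the wait. Because the stock has no $p$-gaps, the filling simulation delivers solutions with delay at most $p(n)$, and the density hypothesis (relative length $> k/q(n)$ for every prefix of length $k$) is exactly what guarantees the stock is long enough that the original simulation catches up to $m_a$ before the stock runs out. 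This keeps the space down to $O(1)$ extra machine states. You would need to replace the ``store every gap'' idea with this single-stock plus interleaving mechanism to make the argument go through.
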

\begin{proof}
We let $h$ be the average delay of $I$. We fix $x$ of length $n$ and describe a process that enumerates $A$ with delay at most $2q(n)h(n) \cdot (q(n)h(n)+p(n))$ and polynomial space on input $x$. Our algorithm runs two processes in parallel: $\En$, the enumerator and $\Ex$, the explorer. Both processes simulate $I$ on input $x$ but at a different speed that we will fix later in the proof. $\En$ is the only one outputting solutions. We call a solution {\em fresh} if it has not yet been enumerated by $\En$.

$\Ex$ simulates $I$ and discovers the boundaries of the largest interval without $p$-gaps containing only fresh solutions that we call the \emph{stock}. More precisely, it stores two machine states: $m_a$ and $m_b$ where $a$ and $b$ correspond to indices of fresh solutions such that there are no $p$-gaps between $a$ and $b$ and it is the largest such interval. Intuitively, the stock contains the fresh solutions that will make up for $p$-gaps in the enumeration of $I$.

$\En$ can work in two different modes. If $\En$ is in simple mode, then it only simulates $I$ on input $x$ and outputs a solution whenever $I$ outputs one and counts the number of steps between two solutions. When it detects a $p$-gap, $\En$ switches to filling mode. In filling mode, $\En$ starts by copying $m_a$ into a new variable $s$ and $m_b$ into a new variable $t$. It then runs two simulations of $I$: the first one is the continuation of the simulation that was done in simple mode. The second one, which we call the {\em filling simulation} is a simulation of $I$ starting in state $m_a$. $\En$ simulates $h(n)q(n)$ steps of the first enumeration and then as many steps as necessary to find the next solution of the filling simulation. Since the stock does not contain $p$-gaps by definition, we know that $\En$ outputs solutions with delay at most $h(n)q(n)+p(n)$. To avoid enumerating the same solution twice, whenever the first simulation reaches the state stored in $s$, we stop the first simulation and $\En$ switches again in simple mode using the filling simulation as starting point. 

We claim that the first simulation will always reach state $s$ before the filling simulation reach the end of the stock. Indeed, assume that the filling simulation has reached the end of the stock and outputs the $b^\th$ solution. By definition, the stock is the largest interval without $p$-gaps before this solution and it is of size at least $b/q(n)$ by assumption. Thus, the first simulation has simulated at least $(b/q(n))h(n)q(n) = b \cdot h(n)$ steps of $I$ in parallel. Thus, by definition of $h$, the $b$ first solutions have been found by the first simulation. It must have reached state $s$ before the filling simulation reaches the end of the interval.

Using this strategy, it is readily verified that if $\En$ has always a sufficiently large stock at hand, it enumerates $A(x)$ entirely with delay at most $h(n)q(n)+p(n)$. 

We now choose the speed of $\Ex$ in order to guarantee that the stock is always sufficiently large: each time $\En$ simulates one step of $I$, $\Ex$ simulates $2h(n)q(n)$ steps of $I$.

There is only one situation that could go wrong: the enumerator can reach state $m_b$, which is followed by a $p$-gap while the explorer has not found a new stock yet. We claim that having chosen the speed as we did, we are guaranteed that it never happens. Indeed, if $\En$ reaches $m_b$, then it has already output $b$ solutions. Thus, $\Ex$ has already simulated at least $b \cdot 2h(n)q(n)$ steps of $I$. By definition of $h$, $\Ex$ has already found $2b \cdot q(n)$ solutions and then, it has found an interval without $p$-gaps of size $(2b \cdot q(n))/q(n) = 2b$ which is necessarily ahead of the simulation of $\En$.

The property of the last paragraph is only true if the simulation of $I$ by $\Ex$ has not stopped before $b \cdot 2h(n)q(n)$ steps. To deal with this case, as soon as $\Ex$ has stopped, $\En$ enters in filling mode if it was not
in this mode and does a third simulation in parallel of $I$ beginning at state $m_b$. This takes care of the solutions after the last stock.
\end{proof}
% 
% \todo{Idée alternative: on maintient  un intervalle de secours en cours d'utilisation et le prochain plus grand trouvé. L'énumération se fait une solution sur deux quand on utilse un intervalle de secours.

Note that in both proofs the polynomial delay we obtain is worse than the average delay of the incremental algorithm but 
the total time is the same.
Also we do not use all properties of an algorithm in $\IncP_1$ but only the fact that the predicate is polynomially balanced.
All known algorithms which are both incremental and in polynomial space are in fact polynomial delay algorithms 
with a bounded number of repetitions and a polynomial time algorithm to decide whether it is the first time a solution is produced~\cite{phdstrozecki}. It seems that if we can turn such an algorithm to one in polynomial delay, we would have solved the general problem.

\begin{openproblem}
Prove or disprove that $\IncP_1$ with polynomial space is equal to $\DelayP$ with polynomial space.
\end{openproblem}

Here we tried to improve the regularity of an algorithm without losing too much memory. 
The opposite question is also natural: is it possible to trade regularity and total time for space in enumeration.
In particular can we improve the memory used by an enumeration algorithm if we are relaxing the constraints on the delay.

\begin{openproblem}
 Can we turn a polynomial delay algorithm using an exponential space memory into an output polynomial 
or even an incremental polynomial algorithm with polynomial memory ? 
 \end{openproblem}

\section{Strict hierarchy in incremental time problems}
\label{sec:hier}

We prove strict hierarchies for $\IncP_{a}^F$ unconditionally and for $\IncP_a$ modulo the {\em Exponential Time Hypothesis} ($\ETH)$. Since $\DelayP \subseteq \IncP_1$ it implies that $\DelayP \neq \IncP$ modulo $\ETH$.
% Remark that the same proofs also work to establish strict hierarchies inside $\OutputP$ or $\OutputP^F$. 

\begin{proposition}
 $\IncP^F_a \subsetneq \IncP^F_b$ when $1\leq a < b$.
\end{proposition}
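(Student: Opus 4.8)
The plan is to build, for each $a$, an explicit padded enumeration problem that sits in $\IncP^F_b$ for every $b > a$ but provably not in $\IncP^F_a$, exploiting the fact that in $\Enum{F}$ we have no polynomial-balance restriction and may therefore hide long computations. First I would fix a decision problem $L$ that is complete for (say) $\mathsf{DTIME}(2^{n^{a}})$ or, more robustly, use a padding/time-hierarchy argument: there is a language $L$ decidable in time roughly $2^{n^{b/a}}$ but not in time $2^{n^{1}}$ (deterministic time hierarchy theorem). Then define the predicate $A(x,y)$ so that, on an input $x$ of size $n$ encoding an instance $w$ of $L$ of size $m$, the solution set $A(x)$ consists of the strings $0\sharp i$ for $0 \le i < N(n)$ if $w \in L$, and $1\sharp i$ for $0 \le i < N(n)$ if $w \notin L$, where $N(n)$ is chosen so that $N(n)^{a}$ is smaller than the time needed to decide $L$ but $N(n)^{b}$ is comfortably larger. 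Concretely, if deciding membership of $w$ costs time $t(m)$, I choose $N(n) = \lfloor t(m)^{1/a} \rfloor$ (using fresh padding in $x$ so that $|x| = n$ is polynomially related to $t(m)^{1/a}$, making the "polynomial in $|x|$" slack harmless).

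The membership $\enum{A} \in \IncP^F_b$ is the easy direction: an algorithm first spends time $t(m)$ to decide whether $w \in L$ (this is a one-time precomputation, allowed since the excerpt grants a polynomial-time precomputation — here I would instead fold it into the enumeration itself, or note $\Enum{F}$ places no such ceiling and the incremental bound absorbs it), then streams out $0\sharp 0, 0\sharp 1, \dots$ or $1\sharp 0, 1\sharp 1, \dots$ with constant delay. To output $k$ solutions the total time is $O(t(m) + k)$; since $t(m) \le 2 N(n)^{a} \le 2 k^{a} \cdot (\text{something})$ once $k$ is a constant fraction of $N(n)$, and more carefully $t(m) = O(N(n)^a) = O(k^a)$ only when $k$ is large — so I must be slightly careful for small $k$. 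The clean fix: pad so that $t(m) \le |x|^{a}$ as well, i.e. choose the padding length of $x$ to be exactly $\lceil t(m)^{1/a}\rceil$, whence $t(m) \le |x|^a \le |x|^b$, so the time to output $k$ solutions is $O(|x|^b + k) = O(k^{b}|x|^{b})$ trivially (it is in fact $O(k|x|^a)$, so the problem is even in $\IncP^F_1$ — wait, that would collapse the hierarchy, so this is exactly the point where I must resist over-padding).

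The subtlety, and the main obstacle, is calibrating $N(n)$ against $|x|$ so that the problem is genuinely in $\IncP^F_b \setminus \IncP^F_a$: I need $|x|$ small relative to the work required. The right balance is to let $|x|$ be polynomially related to $t(m)^{1/b}$ (not $t(m)^{1/a}$): then producing even the first solution requires deciding $L$, costing $t(m) \approx |x|^{b}$, which is $O(1 \cdot |x|^{b})$ — fine for $\IncP^F_b$ — but is $\omega(|x|^{a})$, and since $k=1$ this rules out any $\IncP^F_a$ algorithm, by the hardness of $L$ under time $2^{o(|x|^{b})} = $ (with the hierarchy theorem tuned so that $L \notin \mathsf{DTIME}(|x|^{a} \cdot \mathrm{poly})$... this needs $L$ hard for a time bound strictly between $|x|^a$ and $|x|^b$, which the deterministic time hierarchy theorem supplies). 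I would then argue: if $\enum{A} \in \IncP^F_a$ via machine $M$, then $M$ outputs the first solution in time $c\,|x|^{a}\cdot 1^{a} = c|x|^a$, and reading whether that solution begins with $0$ or $1$ decides $L(w)$ in time polynomial in $|x| \approx t(m)^{1/b}$, i.e. in time $t(m)^{a/b + o(1)} = o(t(m))$, contradicting the lower bound on $L$. I would carry out the steps in this order: (1) invoke the deterministic time hierarchy theorem to pin down $L$ and the time bound; (2) define $A$ and the padding precisely; (3) verify $\enum{A}\in\IncP^F_b$; (4) derive the contradiction for $\IncP^F_a$. Getting the padding exponents and the "first solution forces a full decision" step exactly right is where all the care goes.
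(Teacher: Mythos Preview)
Your proposal contains a genuine gap rooted in a misreading of the definition of $\IncP_a^F$. You write that an $\IncP_a^F$ machine outputs the first solution in time $c|x|^a \cdot 1^a = c|x|^a$, but the incremental-time bound in the paper is $c\,m^{a}\,n^{\beta}$ where the exponent $\beta$ on the input size is an \emph{arbitrary} constant, not tied to $a$. Consequently, for $m=1$ every class $\IncP_a^F$ (for every $a$) allows the same thing: polynomial time in $|x|$ with an unconstrained exponent. Your construction makes \emph{every} solution carry the bit $L(w)$, so outputting even one solution forces you to decide $L$. This cannot separate $\IncP_a^F$ from $\IncP_b^F$: either $L$ is decidable in time $\poly(|x|)$ and the problem lies in $\IncP_a^F$ for all $a$, or it is not and the problem lies in no $\IncP_a^F$, including $\IncP_b^F$. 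Your own hesitation (``wait, that would collapse the hierarchy'') was pointing at exactly this, but the pivot to a first-solution-hard argument does not escape it.

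The fix, and the paper's actual idea, is to decouple the easy solutions from the hard one. Define $A(x)$ to contain $2^n$ trivial solutions (the integers $1,\dots,2^n-1$) that do not depend on $L$ at all, together with a single additional solution $\sharp 0$ or $\sharp 1$ encoding membership of $x$ in a language $L$ decidable in time $O(2^{nb})$ but not $O(2^{na})$ (deterministic time hierarchy). Enumerating the trivial solutions first is essentially free; the last one costs $O(2^{nb}) = O((2^n)^b) = O(m^b)$, so the problem is in $\IncP_b^F$. Conversely, any $\IncP_a^F$ algorithm, run for $O((2^n+1)^a n^{\beta}) = O(2^{na})$ steps, must have produced all $2^n+1$ solutions, in particular the one revealing $L(x)$, contradicting the lower bound on $L$. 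The point is that the separation must come from the $m^a$ versus $m^b$ factor with $m$ exponential in $n$; the polynomial-in-$n$ slack is shared by both classes and cannot do the work.
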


\begin{proof}
By the time hierarchy theorem~\cite{hartmanis1965computational}, there exists a language $L$ which can be decided in time $O(2^{nb})$ but not in time $O(2^{na})$. Let $n = |x|$. We build a predicate $A(x,y)$ which is true if and only if either $y$ is a positive integer written in binary with $y < 2^{n}$ or $y = \sharp 0$ when $x \notin L$ or $y = \sharp 1$ when $x \in L$. We have an algorithm to solve $\enum{A}$: first enumerate the $2^n$ trivial solutions then run the $O(2^{nb})$ algorithm which solves $A$ to compute the last solution. This algorithm is in $\IncP_b^F$, since finding the $2^n$ first solutions can be done in $\IncP_1$ and the last one can be found in time $O((2^n)^b)$.
Assume there is an $\IncP_a^F$ algorithm to solve $\enum{A}$ with a precomputation step bounded by the polynomial $p(n)$.
By running this enumeration algorithm for a time $O(p(n) + 2^{na}) = O(2^{na})$ we are guaranteed to find all solutions.
Therefore one finds either the solution $\sharp 0$ or $\sharp 1$ in time $O(2^{na})$ which is a contradiction therefore $\enum{A} \notin \IncP_a^F$.
\end{proof}

This proof can easily be adapted to prove an unconditional hierarchy inside $\OutputP^F$ and $\DelayP^F$.
In the case of $\DelayP^F$, one must use a padding and a complexity for $L$ of $n^{\log(n)}$ to dominate the precomputation step while satisfying the hypothesis of the time hierarchy theorem.

To prove the existence of a strict hierarchy in $\IncP$, we need to assume some complexity hypothesis since 
$\P = \NP$ implies $\IncP = \IncP_1$ by the same argument as in the proof of Prop.~\ref{prop:output}.
Moreover, the hypothesis must be strong enough to replace the time hierarchy argument.

The Exponential Time Hypothesis states that there exists $\epsilon > 0$ such that there is no algorithm for $\kSAT{3}$ in time $\tilde{O}(2^{\epsilon n})$ where $n$ is the number of variables of the formula and $\tilde{O}$ means that we have a factor of $n^{O(1)}$. The {\em Strong Exponential Time Hypothesis} ($\SETH$) states that for every $\epsilon < 1$, there is no algorithm solving $\SAT$ in time $\tilde{O}(2^{\epsilon n})$.

We show that if $\ETH$ holds, then $\IncP_a \subsetneq \IncP_b$ for all $a < b$. For $t \leq 1$, let $R_t$ be the following predicate: given a CNF $\phi$ with $n$ variables, $R_{t}(\phi)$ contains:

% the integers from $1$ to $2^{nt}-1$ and the satisfying assignments of $\phi$ duplicated $2^n$ times each, that is $SAT(\phi) \times [2^n]$.
\begin{itemize}
\item the integers from $1$ to $2^{nt}-1$
\item the satisfying assignments of $\phi$ duplicated $2^n$ times each, that is $SAT(\phi) \times [2^n]$.
\end{itemize}

% To lighten the notation, we will call $\Pad{t}$ the enumeration problem $\enum{R_t}$.
We let $\Pad{t}$ be the enumeration problem associated to $R_t$, that is $\Pad{t} = \enum{R_t}$. The intuition behind $\Pad{t}$ is the following. Imagine that $t = b^{-1}$ for some $b \in \N$. By adding sufficiently many dummy solutions to the satisfying assignments of a CNF-formula $\phi$, we can first enumerate them quickly and then have sufficient time to bruteforce $SAT(\phi)$ in $\IncP_b$ before outputting the next solution. This shows that $\Pad{b^{-1}} \in \IncP_b$.
Now, if there exists $a < b$ such that $\IncP_a = \IncP_b$, we would have a way to find a solution of $\phi$ in time $\tilde{O}(2^{{a \over b}n})$ which already violate $\SETH$. To show that we also violates $\ETH$ we repeat this trick but we do not bruteforce $\SAT(\phi)$ anymore. We can do better by using this $\tilde{O}(2^{{a \over b}n})$ algorithm for $\SAT$ and we can gain a bit more on the constant in the exponent. We show that by repeating this trick, we can make the constant as small as we want. We formalize this idea:
\begin{lemma}
\label{lem:risin}
  Let $d<1$. If we have an $\tilde{O}(2^{dn})$ algorithm for $\SAT$, then for all $b \in \N$, $\Pad{d \over b}$ is in $\IncP_b$.
\end{lemma}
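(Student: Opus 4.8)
The goal is to show that a fast SAT algorithm — running in time $\tilde{O}(2^{dn})$ — lets us place $\Pad{d/b}$ in $\IncP_b$ for every integer $b$. I would build the enumeration algorithm for $R_{d/b}$ on input a CNF $\phi$ with $n$ variables in two phases, and then verify the incremental time bound $c m^b n^{O(1)}$ directly from the structure of $R_{d/b}(\phi)$.

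\textbf{Phase one: the trivial integer solutions.} First I would output the integers $1, 2, \dots, 2^{nt}-1$ in increasing order, with $t = d/b$; each one is produced in time polynomial in $n$, so after $m \leq 2^{nt}$ steps the first $m$ solutions are out in time $m \cdot n^{O(1)}$, which is certainly within the $\IncP_b$ budget (it is even $\IncP_1$ for this initial segment). The point of the padding is that by the time we have output all $2^{nt}-1$ integers, we have "earned" enough time: an $\IncP_b$ algorithm is allowed to spend $\tilde{O}((2^{nt})^b) = \tilde{O}(2^{ntb}) = \tilde{O}(2^{dn})$ total time once $m$ reaches $2^{nt}$.

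\textbf{Phase two: the duplicated satisfying assignments.} After the integers are exhausted, I would run the $\tilde{O}(2^{dn})$ SAT algorithm once to decide whether $\phi$ is satisfiable, and more generally use self-reducibility together with that algorithm to enumerate $SAT(\phi)$ itself: each successive satisfying assignment can be found by fixing variables one at a time and calling the decision algorithm, costing $\tilde{O}(2^{dn})$ per assignment (a polynomial number of calls, each $\tilde{O}(2^{dn})$). For each assignment $\sigma$ found, I would immediately output its $2^n$ copies $(\sigma, 1), \dots, (\sigma, 2^n)$, each in polynomial time, before moving on to search for the next assignment. The key accounting point: to produce the next \emph{new} satisfying assignment we pay $\tilde{O}(2^{dn})$, but between two such searches we output $2^n$ cheap duplicates; and crucially, by the time we even reach phase two we already have $m \geq 2^{nt} = 2^{dn/b}$ solutions output, so $m^b \geq 2^{dn}$, i.e. the accumulated time budget $c m^b n^{O(1)}$ already dominates a single $\tilde{O}(2^{dn})$ SAT computation. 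Since producing each further block of $2^n$ duplicates only increases $m$, the budget $m^b n^{O(1)}$ stays ahead of the cumulative cost $(\#\text{assignments found so far}+1)\cdot\tilde{O}(2^{dn})$, because $\#SAT(\phi) \leq 2^n$ and $m$ grows by $2^n$ with each block while the time grows by only $\tilde{O}(2^{dn})$ per block, and $2^{dn} \cdot 2^n \ll (2^{dn/b}+2^n\cdot k)^b$ for the relevant ranges. I would check this inequality carefully — it is the one genuine computation in the proof.

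\textbf{Main obstacle.} The delicate part is the time bookkeeping at the \emph{transition} between the two phases and within phase two: I must confirm that for every $m$ with $1 \leq m \leq |R_{d/b}(\phi)|$, the total time $T(\phi, m)$ is bounded by $c m^b n^{O(1)}$, handling uniformly the case where $m$ lands among the integers (easy, linear) and the case where $m$ lands in the middle of searching for the $j$-th satisfying assignment (here $T(\phi,m) \leq 2^{nt}\cdot n^{O(1)} + j\cdot \tilde{O}(2^{dn})$ while $m \geq 2^{nt}-1 + (j-1)2^n$, and one needs $j\cdot 2^{dn} \cdot n^{O(1)} \leq c\, m^b n^{O(1)}$, which follows since $m \geq 2^{dn/b}$ gives $m^b \geq 2^{dn}$ and $j \leq 2^n \leq m$). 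I expect no conceptual difficulty beyond this; the rest is assembling the SAT algorithm via self-reduction and noting that $R_{d/b}$ is polynomially balanced so $\Pad{d/b} \in \EnumP$, which is needed for membership in $\IncP_b$ as opposed to $\IncP_b^F$.
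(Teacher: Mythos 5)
Your proof is correct and follows essentially the same approach as the paper: first enumerate the $2^{dn/b}$ padding integers to build up the incremental-time budget, and then exploit the fact that once $m \approx 2^{dn/b}$ the allowed time $m^b n^{O(1)}$ reaches $\tilde{O}(2^{dn})$, which lets you run the fast SAT algorithm. The paper's phase two is a bit simpler than yours --- it calls the fast SAT algorithm only once to find a single satisfying assignment, outputs its $2^n$ copies (raising the budget to roughly $2^{nb} \geq 2^n$), and then finds the remaining assignments by brute force rather than by repeated self-reduction --- but your variant is also sound and yields the same conclusion.
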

\begin{proof}
  We enumerate the integers from $1$ to $2^{dn \over b}-1$ and then call the algorithm to find a satisfying assignment of $\phi$. We have enough time to run this algorithm since the time allowed before the next anser is $\tilde{O}\Big(\big({2^{dn \over b}}\big)^b\Big) = \tilde{O}(2^{dn})$. If the formula is not satisfiable, then we stop the enumeration. Otherwise, we enumerate all copies of the discovered solution. We have then enough time to bruteforce the other solutions.
\end{proof}

\begin{lemma}
\label{lem:algosat}
  If $\Pad{t}$ is in $\IncP_a$, then there exists an $\tilde{O}(2^{nta})$ algorithm for $\SAT$.
\end{lemma}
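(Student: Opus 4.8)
The plan is to run the hypothesized $\IncP_a$ algorithm for $\Pad{t}$ as a black box and decide $\SAT(\phi)$ simply by \emph{counting} how many solutions it produces within a fixed time budget. The point is that $R_t(\phi)$ always contains the $2^{nt}-1$ dummy integers $1,\dots,2^{nt}-1$, and it contains something beyond them precisely when $\phi$ is satisfiable; in that case it contains at least $2^n$ extra solutions (the $2^n$ copies of any one satisfying assignment), hence at least $2^{nt}$ solutions in total. Concretely, $|R_t(\phi)| = 2^{nt}-1$ when $\phi$ is unsatisfiable and $|R_t(\phi)| \ge 2^{nt}$ when $\phi$ is satisfiable, so deciding $\SAT$ reduces to deciding whether $\Pad{t}(\phi)$ has at least $2^{nt}$ elements, and this is exactly a threshold we can read off cheaply from an incremental algorithm.

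First I would unpack the hypothesis: since $\Pad{t} \in \IncP_a$, there is a polynomial-time precomputation followed by a machine $M$ that, whenever $|R_t(\phi)| \ge m$, outputs $m$ distinct elements of $R_t(\phi)$ within time $c\,m^a |\phi|^b$ for constants $b,c$. I then set the budget $B := (\text{precomputation time}) + c\,(2^{nt})^a |\phi|^b$, which is $\tilde{O}(2^{nta})$ once the $\mathrm{poly}(|\phi|)$ factors are absorbed into $\tilde{O}$ (for the bounded-width formulas relevant to $\ETH$ one moreover has $|\phi| = \mathrm{poly}(n)$). The $\SAT$ algorithm simulates $M$ for at most $B$ steps while counting \textsc{output} instructions.

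Correctness then splits into two cases. If $\phi$ is satisfiable, then $m := 2^{nt} \le |R_t(\phi)|$, so the incremental guarantee forces $M$ to emit at least $2^{nt}$ solutions within $c\,(2^{nt})^a|\phi|^b \le B$ steps; upon seeing the $(2^{nt})$-th \textsc{output} we answer ``satisfiable''. If $\phi$ is unsatisfiable, then $|R_t(\phi)| = 2^{nt}-1$, so $M$ emits all $2^{nt}-1$ solutions within $c\,(2^{nt}-1)^a|\phi|^b \le B$ steps and then halts (the model stops immediately after the last \textsc{output}); observing that $M$ has halted with only $2^{nt}-1$ outputs we answer ``unsatisfiable''. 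Within the budget these two events are exhaustive and mutually exclusive, so the procedure is correct and runs in time $\tilde{O}(2^{nta})$.

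I do not expect a real obstacle here; the one subtlety worth flagging is that the $\IncP_a$ running-time bound is only promised for $m \le |R_t(\phi)|$, so in the unsatisfiable case one must not \emph{ask} $M$ for a $(2^{nt})$-th solution but instead detect that $M$ has halted within the budget --- which it does precisely because it has only $2^{nt}-1$ solutions to produce. The remaining bookkeeping (per-step simulation overhead, the $|\phi|^b$ factor, the precomputation step) is polynomial in the input size and is the usual thing swallowed by $\tilde{O}$ in the fine-grained setting.
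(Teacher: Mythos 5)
Your proof is correct and follows essentially the same approach as the paper's: run the hypothesized $\IncP_a$ enumerator for a budget of $\tilde{O}(2^{nta})$ steps and decide satisfiability according to whether $2^{nt}$ solutions appear. The only cosmetic difference is that the paper additionally notes that, in the satisfiable case, one of the first $2^{nt}$ outputs is necessarily a satisfying assignment (since there are only $2^{nt}-1$ padding integers), so one actually recovers a witness rather than merely deciding; your counting criterion is equivalent for the decision version, and your explicit handling of the unsatisfiable case (the machine halts after the $(2^{nt}-1)$-th output within budget) is a careful spelling-out of what the paper states more tersely.
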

\begin{proof}
  Since $\Pad{t}$ is in $\IncP_a$, we have an algorithm for $\Pad{t}$ that outputs $m$ elements of $R_t(\phi)$ in time $O(m^a |\phi|^c)$ for a constant $c$. We can then output $2^{nt}$ elements of $R_t(\phi)$ in time $O(2^{nta}|\phi|^c) = \tilde{O}(2^{nta})$. If the enumeration stops before having output $2^{nt}$ solutions, then the formula is not satisfiable. Otherwise,  we have necessarily enumerated at least one satisfying assignment of $\phi$ which gives the algorithm.
\end{proof}

\begin{lemma}
\label{lem:amplification}
If $\IncP_a = \IncP_b$, then for all $i \in \N$, $\Pad{a^i \over b^{i+1}}$ is in $\IncP_a$.
\end{lemma}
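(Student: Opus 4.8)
The plan is to prove Lemma~\ref{lem:amplification} by induction on $i$, combining the two previous lemmas in a bootstrapping loop. The base case $i=0$ asks that $\Pad{1 \over b}$ is in $\IncP_a$; since $\IncP_a = \IncP_b$ by hypothesis, it suffices to show $\Pad{1/b} \in \IncP_b$, and this is exactly Lemma~\ref{lem:risin} applied with the trivial $\tilde O(2^{dn})$ algorithm for $\SAT$ given by $d$ arbitrarily close to $1$ (bruteforce; more precisely, for any fixed $b$ we only need $d<1$ with $d/b \geq$ something harmless, and the bruteforce algorithm runs in $\tilde O(2^n)$, so we can take $d$ as close to $1$ as we like). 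Actually the cleanest phrasing: the bruteforce algorithm is an $\tilde O(2^{dn})$ algorithm for every $d \geq 1$, but Lemma~\ref{lem:risin} is stated for $d<1$; so one either restates it for $d \leq 1$ or simply notes $\Pad{1/b} \in \IncP_b$ directly by the same padding argument as in Lemma~\ref{lem:risin}. Either way the base case is immediate.

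For the inductive step, I would assume $\Pad{a^i / b^{i+1}} \in \IncP_a$. Then Lemma~\ref{lem:algosat}, applied with $t = a^i/b^{i+1}$, yields an $\tilde O\big(2^{n \cdot (a^i/b^{i+1}) \cdot a}\big) = \tilde O\big(2^{n \cdot a^{i+1}/b^{i+1}}\big)$ algorithm for $\SAT$. Now set $d := a^{i+1}/b^{i+1}$; since $a < b$ and both are positive integers, $d = (a/b)^{i+1} < 1$, so Lemma~\ref{lem:risin} applies and gives $\Pad{d/b} \in \IncP_b$, i.e. $\Pad{a^{i+1}/b^{i+2}} \in \IncP_b$. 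Finally, since $\IncP_a = \IncP_b$ by hypothesis, $\Pad{a^{i+1}/b^{i+2}} \in \IncP_a$, which is precisely the statement for $i+1$. This closes the induction.

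The only real subtlety — and the step I would watch most carefully — is verifying that the exponent $d = (a/b)^{i+1}$ stays strictly below $1$ at every stage so that Lemma~\ref{lem:risin} is genuinely applicable; this is where the hypothesis $a<b$ (rather than $a \leq b$) is essential, and it is also the mechanism that lets the constant in the exponent shrink geometrically as $i$ grows, which is what will later contradict $\ETH$. A secondary point to state cleanly is that the polynomial overhead ($|\phi|^c$ factors, the precomputation step) is absorbed by the $\tilde O(\cdot)$ notation at each turn of the loop and does not accumulate badly, because each application of Lemma~\ref{lem:algosat} and Lemma~\ref{lem:risin} reintroduces only a fresh $n^{O(1)}$ factor. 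I would also remark explicitly that the induction is uniform in $i$ only in the sense that each fixed $i$ gives a valid statement; we are not claiming a single algorithm working for all $i$ simultaneously, which is fine since the eventual contradiction with $\ETH$ will be obtained by choosing $i$ large enough depending on the putative $\epsilon$.
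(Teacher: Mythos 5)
Your proof is correct and follows essentially the same route as the paper: induction on $i$, with the base case using Lemma~\ref{lem:risin} and the bruteforce $\SAT$ algorithm, and the inductive step chaining Lemma~\ref{lem:algosat} then Lemma~\ref{lem:risin} and closing via the hypothesis $\IncP_a = \IncP_b$. Your remark about the $d<1$ versus $d\leq 1$ mismatch in the base case is a fair observation of a small imprecision the paper glosses over, and your note that $a<b$ is what keeps $d$ strictly below $1$ throughout is exactly the right thing to watch.
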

\begin{proof}
  The proof is by induction on $i$. For $i=0$, by Lemma~\ref{lem:risin}, $\Pad{1 \over b}$ is in $\IncP_b$ since we have an $\tilde{O}(2^n)$ bruteforce algorithm for $\SAT$. Thus, if $\IncP_b = \IncP_a$, $\Pad{1 \over b}$ is in $\IncP_a$ too.

  Now assume that $\Pad{a^i \over b^{i+1}}$ is $\IncP_a$. By Lemma~\ref{lem:algosat}, we have an $\tilde{O}(2^{dn})$ algorithm for $d = {a^{i+1} \over b^{i+1}}$. Thus, by Lemma~\ref{lem:risin}, $\Pad{d \over b} = \Pad{a^{i+1} \over b^{i+2}}$ is in $\IncP_b = \IncP_a$.
\end{proof}

\begin{theorem}
\label{thm:separation}
  If $\ETH$ holds, then $\IncP_a \subsetneq \IncP_b$ for all $a<b$.
\end{theorem}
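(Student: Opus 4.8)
The plan is to derive a contradiction with $\ETH$ from the assumption that the hierarchy collapses at some level, i.e. that $\IncP_a = \IncP_b$ for some $a < b$. The engine of the argument is already assembled in Lemmas~\ref{lem:risin}, \ref{lem:algosat} and \ref{lem:amplification}: the first two set up a tight correspondence between $\SAT$ algorithms running in time $\tilde{O}(2^{dn})$ and membership of $\Pad{d/b}$ in $\IncP_b$, and the amplification lemma shows that a collapse lets us iterate this correspondence, producing ever faster $\SAT$ algorithms. So the proof is essentially a bookkeeping argument on the exponents.

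Concretely, I would argue by contraposition. Suppose $\ETH$ holds but $\IncP_a = \IncP_b$ for some $1 \le a < b$. By Lemma~\ref{lem:amplification}, for every $i \in \N$ the problem $\Pad{a^i / b^{i+1}}$ lies in $\IncP_a$. Applying Lemma~\ref{lem:algosat} with $t = a^i/b^{i+1}$, we obtain an algorithm for $\SAT$ (hence for $\kSAT{3}$) running in time $\tilde{O}(2^{n t a}) = \tilde{O}(2^{\epsilon_i n})$ where $\epsilon_i = a^{i+1}/b^{i+1} = (a/b)^{i+1}$. Since $a < b$, the ratio $a/b$ is strictly less than $1$, so $\epsilon_i \to 0$ as $i \to \infty$. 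In particular, for the $\epsilon > 0$ promised by $\ETH$, choosing $i$ large enough that $(a/b)^{i+1} < \epsilon$ yields an algorithm for $\kSAT{3}$ in time $\tilde{O}(2^{\epsilon n})$, contradicting $\ETH$. Therefore $\IncP_a \ne \IncP_b$, and combined with the trivial inclusion $\IncP_a \subseteq \IncP_b$ for $a \le b$ (immediate from the definition, since $cm^a n^b \le cm^{a'} n^b$ for $a \le a'$ and $m \ge 1$) we conclude $\IncP_a \subsetneq \IncP_b$.

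The only subtlety worth double-checking is that the chain of lemmas may be invoked for arbitrarily large $i$ without any parameter blowing up in a way that breaks the estimates: the constant $c$ in Lemma~\ref{lem:algosat} and the hidden polynomial factors in $\tilde{O}$ depend on the specific $\IncP_a$-algorithm for $\Pad{a^i/b^{i+1}}$, and a priori this algorithm changes with $i$. However, the collapse $\IncP_a = \IncP_b$ is a single fixed hypothesis, and for each fixed $i$ we only need one algorithm, one application of each lemma, and one value of $\epsilon$; we never need uniformity across $i$. So there is no genuine obstacle here — the argument is a finite derivation for the particular $i$ selected from $\epsilon$. I expect the main (minor) obstacle to be simply stating cleanly why $t = a^i/b^{i+1} \le 1$, which is needed so that $\Pad{t}$ is well-defined: this holds because $a^i \le b^i \le b^{i+1}$.

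For completeness I would also note, as the text already hints, that the same construction with $b = 2$ and $i = 0$ recovers the weaker statement that $\IncP_1 \subsetneq \IncP_2$ under $\SETH$ alone, so the full strength of $\ETH$ is used precisely to push the exponent down to an arbitrarily small constant via the amplification step. And since $\DelayP = \IncP_1$ by Corollary~\ref{prop:Inc1} and $\IncP_1 \subsetneq \IncP_2 \subseteq \IncP$, the theorem immediately yields $\DelayP \subsetneq \IncP$ under $\ETH$, which is the separation advertised in the introduction.
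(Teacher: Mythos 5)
Your proposal is correct and follows the same route as the paper: invoke Lemma~\ref{lem:amplification} to place $\Pad{a^i/b^{i+1}}$ in $\IncP_a$ for every $i$, then apply Lemma~\ref{lem:algosat} to extract a $\SAT$ algorithm whose exponent tends to $0$, contradicting $\ETH$. Your bookkeeping is in fact slightly more careful than the paper's: the exponent is $(a/b)^{i+1}$ as you compute (the paper writes $(a/b)^{i}$, an immaterial off-by-one), and you correctly flag the point that $a^i/b^{i+1}\le 1$ so that $\Pad{t}$ is well-defined and that no uniformity in $i$ of the hidden constants is required.
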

\begin{proof}
  If there exists $a < b$ such that $\IncP_a = \IncP_b$, then by Lemma~\ref{lem:amplification}, for all $i$, $\Pad{a^i \over b^{i+1}}$ is in $\IncP_a$. Thus by Lemma~\ref{lem:algosat}, we have an $\tilde{O}(2^{d_in})$ algorithm for $\SAT$ and then for $\kSAT{3}$ in particular, where $d_i = \big ( {a \over b} \big )^{i}$. Since $\lim_{i \rightarrow \infty} d_i = 0$, this contradicts $\ETH$.
\end{proof}

Observe that by Proposition~\ref{prop:UInca} and Theorem~\ref{thm:separation}, we also have that if $\ETH$ holds, then we also have a strict hierarchy inside $\UIncP$. 

In the previous proofs, we did not really used $\SAT$. We needed an $\NP$ problem, 
with a set of easy to enumerate potential solutions of size $2^n$ that cannot be solved in time $2^{o(n)}$.
For instance we could use $\mathsf{CIRCUIT\textsf{-}SAT}$ which is the problem of finding a satisfying assignment to a Boolean circuit. We can thus prove our result by assuming a weaker version of $\ETH$ as it is done in~\cite{AbboudHWW16}. It would be nice to further weaken the hypothesis, but it seems hard to rely only on a classical complexity hypothesis such as $\P \neq \NP$.
% 
% \todo{
% One of the most interesting enumeration problem with many application is the enumeration of minimal transversals of an hypergraph. It is not yet proved to be in $\OutputP$ and when restricted to simpler hypergraphs it is often in $\IncP_b$ for $b \geq 2$ (vérifier, citer).
% Mettre dans l'intro comme exemple de problème dans dans IncP 
% 
% }
The other way we could improve this result, is to prove a lower bound for a natural enumeration problem instead of $\Pad{t}$.

\begin{openproblem}
  Prove that enumerating the minimal transversals of an hypergraph cannot be done in $\IncP_1$ if $\ETH$ hods.
\end{openproblem}

It is also natural to try to obtain the same hierarchy for $\DelayP$. However, the difference in total time 
between two algorithms with different polynomial delays is very small and the proof for the separation of the incremental hierarchy does not seem to carry on.

\begin{openproblem}
  Prove there is a strict hierarchy inside $\DelayP$ assuming $\SETH$ or even stronger hypotheses.
\end{openproblem}

\section{From Uniform Generator to efficient randomized enumeration}\label{sec:uniform}

In this section, we explore the relationship between efficient enumeration and random generation of combinatorial structures or sampling. The link between sampling and counting combinatorial structures has already been studied. For instance, Markov Chain Monte Carlo algorithms can be used to compute an approximate number of objects~\cite{jerrum2003counting} or in the other direction, generating functions encoding the number of objects of each size can be used to obtain Boltzman samplers~\cite{duchon2004boltzmann}. 

In her thesis~\cite{Goldberg91} (Section 2.1.2), Leslie Goldberg proved several results relating the existence of a good sampling algorithm for a set $S$ with the existence of an efficient algorithm to enumerate $S$. In this section, we review these results and improve the runtime of the underlying algorithms. Moreover, we show that if we allow repetitions during the enumeration, we can design algorithms  using only polynomial space. This complements a result by Goldberg showing a space-time trade-off if we do not relax the notion of enumeration.

% if the elements of $A(x)$ for $\enum{A} \in \EnumP$ and $x \in \{0,1\}^*$ can be uniformly generated in time polynomial in $|x|$, then $A(x)$  can be enumerated exhaustively with high probability and polynomial delay.

\begin{definition}
 Let $\enum{A} \in \EnumP$. A {\em polytime uniform generator} for $A$ is a randomized RAM machine $M$ which outputs an element $y$ of $A(x)$ in time polynomial in $|x|$ such that the probability over every possible run of $M$ on input $x$ that $M$ outputs $y$ is ${|A(x)|}^{-1}$.
%  
%  A RAM machine $M$ is a {\em generator} for $\enum{A}$, if for every $x$, $M$ outputs an element of $A(x)$ on input $x$. A {\em uniform generator} for $A$ is a randomized RAM machine $M$ such that for every $y \in A(x)$, the probability over every possible run of $M$ on input $x$ that $M$ outputs $y$ is ${|A(x)|}^{-1}$. 
\end{definition} % Do we use the notion of generator at all?

We now define a randomized version of $\IncP$, which has first been introduced in~\cite{phdstrozecki,strozecki2013enumerating}
to capture random polynomial interpolation algorithms.

\begin{definition}
\label{def:incp1rand}
A problem $\enum{A}$ is in randomized $\IncP_k$ if $\enum{A} \in \EnumP$ and there exists constants $a,b,c \in \N$ and a randomized RAM machine $M$ such that for every $x \in \{0,1\}^*$ and $\epsilon \in \mathbb{Q}_+$, the probability that $M$, on input $x$ and $\epsilon$, enumerates $A(x)$ in incremental time $cm^k n^a\epsilon^{-b}$ is greater than $1-\epsilon$.
\end{definition}

Definition~\ref{def:incp1rand} can be understood as follows, on input $x \in \{0,1\}^*$ and $\epsilon \in \mathbb{Q}_+$, the probability of the following fact is at least $1-\epsilon$: for every $t \leq |A(x)|$, $M$ has enumerated $t$ distinct elements of $A(x)$ after $ct^kn^a\epsilon^{-b}$ steps and stops in time less than $|A(x)|^kn^a\epsilon^{-b}$.
% Intuitively, a problem $A$ is in randomized $\IncP_1$ if there exists a machine that behaves like an incremental linear algorithm for $A$ with high probability.

\begin{theorem}
\label{thm:gentoincp1}
 If $\enum{A} \in \EnumP$ has a polytime uniform generator, then $\enum{A}$ is in randomized $\IncP_1$.
\end{theorem}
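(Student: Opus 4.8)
The plan is to use the polytime uniform generator as a black box and repeatedly sample elements of $A(x)$, keeping only the fresh ones, and to argue that this "coupon collector" style process enumerates all of $A(x)$ with an incremental linear delay with high probability. First I would fix the input $x$ and write $N = |A(x)|$; since $\enum{A} \in \EnumP$, elements of $A(x)$ have size polynomial in $|x|$, so after collecting a set $\mathcal{S}$ of already-found solutions we can test membership in $\mathcal{S}$ and append a new element in time polynomial in $|x|$ and $|\mathcal{S}| \leq N$. The key observation is the classical coupon-collector bound: if $t$ distinct elements have already been found, the probability that a single fresh call of the generator produces a new one is $(N-t)/N$, so the expected number of calls to go from $t$ to $t+1$ distinct elements is $N/(N-t)$, and the expected number of calls to collect $m$ distinct elements is $\sum_{i=0}^{m-1} N/(N-i) \leq m \cdot N/(N-m+1)$. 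This is where the linear-in-$m$ behavior comes from when $m$ is not too close to $N$; for $m$ close to $N$ one uses the cruder bound $N H_N = O(N \log N) = O(N \cdot \mathrm{poly}(|x|))$, which is still linear in the number of solutions up to the polynomial factor in $|x|$ that $\IncP_1$ allows.

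Next I would turn these expectation bounds into the high-probability incremental guarantee demanded by Definition~\ref{def:incp1rand}. The cleanest route is to run the sampling process in phases (or simply bound a single run): for a target confidence $\epsilon$, one shows via Markov's inequality (or a Chernoff bound on the number of successful "new coupon" events among a fixed budget of generator calls) that with probability at least $1 - \epsilon$, for every $t \leq N$ simultaneously, the algorithm has produced $t$ distinct solutions within $c\, t\, n^a \epsilon^{-b}$ steps, and that it halts within $N^1 n^a \epsilon^{-b}$ steps. To get the "for every $t$ simultaneously" statement one can either take a union bound over the at most $N \leq 2^{\mathrm{poly}(|x|)}$ values of $t$ — but that would cost a $\mathrm{poly}(|x|)$ factor in the exponent of the failure probability, so instead it is better to union-bound over geometrically spaced checkpoints $t \in \{1,2,4,\dots\}$ (only $O(\log N) = O(\mathrm{poly}(|x|))$ of them) and use monotonicity of $T(x,\cdot)$ between checkpoints; this keeps the $\epsilon^{-b}$ dependence polynomial. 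Each checkpoint failure probability is made at most $\epsilon / \mathrm{poly}(|x|)$ by running enough generator calls, absorbing the $\mathrm{poly}(|x|)$ into the $n^a$ factor and the extra $\log(1/\epsilon)$ into $\epsilon^{-b}$.

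One subtlety I would address explicitly: the algorithm must eventually \emph{know} it has found all of $A(x)$ so that it can stop — recall the convention that the RAM halts right after the last \textsc{output}. The uniform generator by itself never certifies that $\mathcal{S} = A(x)$. The fix is that $\enum{A} \in \EnumP$ only guarantees a polytime, polynomially balanced predicate, not a counting oracle; but a polytime uniform generator does give us, implicitly, the ability to estimate $N$ — or, more simply, we run the generator for a budget of calls that is large enough (say $\Theta(N \log(N/\epsilon))$ calls, where a high-confidence upper estimate of $N$ is itself obtained by a preliminary sampling/doubling phase) so that with probability $\geq 1-\epsilon$ all coupons have been seen, and then halt. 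I would present the estimation of $N$ as a doubling search: guess $\hat N = 1, 2, 4, \dots$; for each guess run $\Theta(\hat N \log(\hat N/\epsilon))$ generator calls; if the number of distinct elements collected equals the number collected in a second independent batch (or simply stabilizes), accept $\hat N$. The total work is dominated by the last, correct phase and stays $O(N \cdot \mathrm{poly}(|x|) \cdot \mathrm{poly}(1/\epsilon))$.

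The main obstacle I expect is precisely the interaction between the coupon-collector tail (which is heavy near $t = N$, forcing the $\log N$ and hence the extra $\mathrm{poly}(|x|)$ factor) and the requirement that the incremental bound hold \emph{for all} $t$ with only a polynomial $\epsilon^{-b}$ overhead — i.e. controlling the union bound cost. The geometric-checkpoint trick plus monotonicity of $T(x,t)$ is what makes this go through; everything else (membership tests in $\mathcal{S}$, the preliminary estimate of $N$, bookkeeping of the halting condition) is routine given that the solutions are polynomially bounded and the generator runs in time polynomial in $|x|$. A final remark worth including: this construction uses space proportional to $|\mathcal{S}|$, which can be exponential in $|x|$, matching the exponential-space phenomenon noted earlier for turning incremental algorithms into regular ones; getting polynomial space is exactly the content of the later results in Section~\ref{sec:uniform} that relax enumeration to allow repetitions.
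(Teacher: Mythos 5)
Your high-level plan (sample with the generator, deduplicate via a stored set, analyze via coupon collector, worry about when to stop) is the same as the paper's, but you miss the one design choice that makes the proof clean, and the stopping rule you propose in its place has a real gap.

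The paper's algorithm keeps a counter $r$ of generator calls and a set $E$ of distinct solutions found, and \emph{halts exactly when $r > K\cdot |E|$} for $K = 2\big(p(|x|)-\log(\epsilon/2)\big)$. This single choice dissolves the two difficulties you spend most of your proposal on. First, the ``for every $t$ simultaneously'' incremental guarantee is \emph{deterministic} while the algorithm is running: as long as the loop continues, $r \le K|E|$, so after $tK$ iterations one necessarily has $|E|\ge t$. There is nothing to union-bound, no geometric checkpoints, no Chernoff tail over all prefixes --- the only probabilistic event one needs to control is the algorithm stopping before $E=A(x)$, which is a single bad event and is bounded directly. Your instinct that a union bound over all $t\le 2^{p(|x|)}$ is too expensive is correct, but it is solving a problem the right algorithm simply doesn't have.

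Second, this same stopping rule is the answer to the subtlety you raise about ``knowing when to stop'': no estimate of $N$ is needed at all. Your doubling-search replacement is where the genuine gap lies. The acceptance criterion ``the number of distinct elements collected equals the number collected in a second independent batch, or simply stabilizes'' is not well defined and you give no argument that it accepts the correct $\hat N$ with probability $\ge 1-\epsilon$: for $\hat N \ll N$, two independent batches of $\Theta(\hat N\log(\hat N/\epsilon))$ draws can easily agree on their distinct-element count while both having seen only a small fraction of $A(x)$, and nothing you write rules this out. You would also need to specify whether solutions found during the estimation phase are output (if not, the incremental bound fails for small $t$; if yes, the phase structure collapses and you are back to ``keep drawing and decide when to stop,'' which is the question you started with). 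The paper's actual proof instead splits the bad event into $\Pr(T\le s/2)$ --- bounded because stopping with $t\le s/2$ distinct solutions requires $K$ consecutive stale draws, each stale with probability at most $1/2$ --- and $\Pr(s/2 < T < s)$ --- bounded because $T>s/2$ forces at least $Ks/2$ total draws, enough to have hit every element with high probability. I'd recommend reorganizing the argument around that self-calibrating stopping condition; the rest of your coupon-collector intuition and the remark on exponential space are fine and match the paper.
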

\begin{proof}
\begin{algorithm}
 \KwData{$x \in \{0,1\}, \epsilon \in \mathbb{Q}_+$}
 \Begin{
   $E \leftarrow \emptyset$; 
   $r \leftarrow 0$\;
   $K \leftarrow 2 \cdot \big(p(|x|)-\log(\epsilon/2)\big)$\;
   \While{$r \leq K \cdot |E|$}{
     Draw $e \in A(x)$ uniformly and $r \leftarrow r+1$\;
     \If{$e \notin E$}
     {
       Output $e$ and $E \leftarrow E \cup \{e\}$; 
     }
   }
 }
  \caption{An algorithm to enumerate $\enum{A}$ in randomized $\IncP_1$, where every element of $A(x)$ is of size at most $p(|x|)$.}
  \label{alg:generatorenum}
\end{algorithm}

Algorithm~\ref{alg:generatorenum} shows how to use a generator for $A$ to enumerate its solutions in randomized $\IncP_1$. The idea is the most simple: we keep drawing elements of $A(x)$ uniformly by using the generator. If the drawn element has not already been enumerated, then we output it and remember it in a set $E$. We keep track of the total number of draws in the variable $r$. If this variable reaches a value that is much higher than the number of distinct elements found at this point, we stop the algorithm. We claim that Algorithm~\ref{alg:generatorenum} is in randomized $\IncP_1$, the analysis is similar to the classical coupon collector theorem~\cite{erdos1961classical}.

We let $p$ be a polynomial such that for every $x \in \{0,1\}^*$, the size of elements of $A(x)$ is a most $p(|x|)$. Such a polynomial exists since $\enum{A} \in \EnumP$. Observe that all operations can be done in polynomial time in $|x|$ since we can encode $E$ -- the set of elements that have already been enumerated -- by using a datastructure such as a trie for which adding and searching for an element may be done in time $O(p(|x|)$, the size of the element.

Moreover, observe that if the algorithm is still running after $tK$ executions of the while loop, then we have $|E| \geq t$, thus we have enumerated more than $t$ elements of $A(x)$. Since each loop takes a time polynomial in $|x|$ and $K$ is polynomial in $|x|$ and $\epsilon$, we have that if the algorithm still runs after a time $t \cdot \poly(|x|)$, then the run is similar to a run in $\IncP_1$.

Hence, to show that $\enum{A}$ is in randomized $\IncP_1$, it only remains to prove that the probability that Algorithm~\ref{alg:generatorenum} stops before having enumerated $A(x)$ completely is smaller than $\epsilon$. The main difficulty is to decide when to stop. It cannot be done deterministically since we do not know $|A(x)|$ {\em a priori}. Algorithm~\ref{alg:generatorenum} stops when the total number of draws $r$ is larger than $K \cdot |E|$, where $E$ is the set of already enumerated elements of $A(x)$. In the rest of the proof, we prove that with $K =  2 \cdot \big(p(|x|)-\log(\epsilon/2)\big)$, Algorithm~\ref{alg:generatorenum} stops after having enumerated $A(x)$ completely with probability greater than $1-\epsilon$.

In the following, we fix $x \in \{0,1\}^*$ and $\epsilon \in \mathbb{Q}_+$. We denote by $s = |A(x)|$ the size of $A(x)$. Remember that we have $s \leq 2^{p(|x|)}$. We denote by $T$ the random variable whose value is the number of distinct elements of $A(x)$ that have been enumerated when the algorithm stops. Our goal is to show that $\pr(T < s) \leq \epsilon$. 

We start by showing that $\pr(T \leq s/2) \leq \epsilon/2$. Let $t \leq s/2$. We bound the probability that $T = t$. If the algorithm stops after having found $t$ solutions, we know that it has found $t$ solutions in less than $1+(t-1)K$ draws, otherwise, if the algorithm had found less than $t$ solutions after $1+(t-1)K$ draws then the while loop would have finished. After that, it keeps on drawing already enumerated solutions until it has done $1+tK$ draws and stops. Thus, it does at least $K$ draws without finding new solutions. Since $t \leq s/2$, the probability of drawing a solution that was already found is at most $1/2$. Thus for all $t \leq s/2$,
\[ \pr(T=t) \leq 2^{-K} \leq 2^{-p(|x|)}(\epsilon/2) \]
since $K \geq -\log(\epsilon/2)$. Now, applying the union bound yields:
\[\pr(T \leq s/2) \leq \sum_{t=1}^{s/2} \pr(T=t) \leq (s/2)2^{-p(|x|)}(\epsilon/2) \leq \epsilon/2 \]
since $s \leq 2^{p(|x|)}$.

Now, we show that $\pr(s/2 < T < s) \leq \epsilon/2$. Assume that $T > s/2$. Then, after $K \cdot (s/2)$ draws, the algorithm has not stopped. Thus the probability that $s/2 < T < s$ is smaller than the probability that, after $r = K \cdot (s/2)$ draws, we have not found every element of $A(x)$. Given an element $y \in A(x)$, the probability that $y$ is not drawn after $r$ draws is $(1-1/s)^r$. Thus, the probability that after $r = K \cdot (s/2)$ draws, we have not found every element of $A(x)$ is at most 
\[ 
\begin{aligned}
s \cdot (1-1/s)^r  & \leq s \cdot 2^{-r/s} &\\
& \leq s \cdot 2^{\log(\epsilon/2)-p(|x|)} & \text{since } r = K \cdot (s/2)\\
& \leq \epsilon/2 \cdot s2^{-p(|x|)} &\\
& \leq \epsilon/2 & \text{since } s \leq 2^{p(|x|)}
\end{aligned}
\]

In the end, $\pr(T < s) \leq \pr(T \leq s/2) + \pr(s/2 < T < s/2) \leq \epsilon$. We observe that the running time of Algorithm~\ref{alg:generatorenum} is actually polynomial in $\log(\epsilon^{-1})$ which is a much better bound than the one of Definition~\ref{def:incp1rand} since $\log(\epsilon^{-1})$ is polynomial in the size of the encoding of $\epsilon$ for $\epsilon < 1$.
\end{proof}

Applying the same technique as Prop.~\ref{prop:Inc1}, we can turn Algorithm~\ref{alg:generatorenum} into a randomized $\DelayP$ algorithm by amortizing the generation of solutions.

\begin{corollary}
 If $\enum{A} \in \EnumP$ has a polytime uniform generator, then $\enum{A}$ is in randomized $\DelayP$.
\end{corollary}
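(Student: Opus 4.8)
The plan is to combine Algorithm~\ref{alg:generatorenum} with the amortization idea from Proposition~\ref{prop:UInca} (the case $a=0$), exactly as Corollary~\ref{prop:Inc1} turns $\IncP_1$ into $\DelayP$, but being careful that the $\IncP_1$ guarantee we have here is only probabilistic. First I would recall from the proof of Theorem~\ref{thm:gentoincp1} that, with $K = 2\big(p(|x|)-\log(\epsilon/2)\big)$, with probability at least $1-\epsilon$ the run of Algorithm~\ref{alg:generatorenum} has the property that for every $t \le |A(x)|$ the $t$-th distinct solution is output within $ct n^a \epsilon^{-b}$ steps, for suitable constants $a,b,c$, and the whole run stops within $c|A(x)| n^a\epsilon^{-b}$ steps. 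Call this the \emph{good event}; we work conditioned on it, since everything else contributes at most $\epsilon$ to the failure probability.

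Next I would describe the wrapper. Run Algorithm~\ref{alg:generatorenum} as a subprocess, but instead of immediately outputting each fresh drawn element, push it onto a FIFO queue $\ell$. Maintain a step counter and a solution counter $k$ (initially $1$); every time the counter reaches $k\cdot c n^a \epsilon^{-b}$, pop the head of $\ell$, output it, and increment $k$. This is the $a=0$ instance of the construction in Proposition~\ref{prop:UInca}: on the good event, when the counter hits $k\cdot c n^a\epsilon^{-b}$ the subprocess has already produced $k$ fresh solutions while the wrapper has output only $k-1$, so $\ell$ is nonempty (or the enumeration is over). Hence the delay between consecutive outputs is the time to simulate $c n^a\epsilon^{-b}$ steps of the subprocess plus the cost of updating counters and writing one solution of size $p(|x|)$, which is polynomial in $|x|$ and $\epsilon^{-1}$. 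So on the good event the wrapper is a genuine polynomial-delay run, and it outputs exactly $A(x)$ because the subprocess does; thus $\enum{A}$ is in randomized $\DelayP$.

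One subtlety worth flagging explicitly: the bound $c n^a\epsilon^{-b}$ on the incremental time in Definition~\ref{def:incp1rand} is a fixed polynomial we can compute from $n$ and $\epsilon$ (indeed, as noted at the end of the proof of Theorem~\ref{thm:gentoincp1}, one can even take the dependence on $\epsilon$ to be in $\log(\epsilon^{-1})$), so the threshold $k\cdot c n^a\epsilon^{-b}$ is known to the wrapper without knowing $|A(x)|$; this is what makes the amortization schedule well-defined. If the good event fails, the wrapper may output nothing or stall, but that happens with probability at most $\epsilon$, which matches the definition of randomized $\DelayP$ (defined analogously to Definition~\ref{def:incp1rand}, replacing incremental time $cm^kn^a\epsilon^{-b}$ by delay $cn^a\epsilon^{-b}$). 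I do not expect a serious obstacle here; the only thing to get right is bookkeeping, namely that conditioning on the good event preserves the probability $1-\epsilon$ and that the queue argument is applied to the \emph{fresh} outputs of Algorithm~\ref{alg:generatorenum} rather than to its raw draws.
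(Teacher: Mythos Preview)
Your proposal is correct and follows exactly the approach the paper indicates: apply the amortization/queue technique of Proposition~\ref{prop:UInca} (the $a=0$ case behind Corollary~\ref{prop:Inc1}) to the randomized $\IncP_1$ algorithm of Theorem~\ref{thm:gentoincp1}, conditioning on the good event. Your write-up is in fact more detailed than the paper's one-line justification, and the subtleties you flag (computability of the threshold, applying the queue to fresh outputs) are the right ones.
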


In~\cite{Goldberg91}, Goldberg uses generators that are not necessarily uniform and may be biased by a factor $b$. We can easily modify Algorithm~\ref{alg:generatorenum} to make it work with a biased generator.
\begin{definition}
 Let $\enum{A} \in \EnumP$ and $b$ a polynomial. A {\em polytime $b$-biased generator} for $A$ is a randomized RAM machine $M$ which outputs an element $y$ of $A(x)$ in time polynomial in $|x|$ such that the probability over every possible run of $M$ on input $x$ that $M$ outputs $y$ is at least ${|A(x)|b(x)}^{-1}$.
\end{definition} % Do we use the notion of generator at all?

\begin{theorem}
\label{thm:gentoincp1biased}
 If $\enum{A} \in \EnumP$ has a polytime $b$-biased generator, then $\enum{A}$ is in randomized $\IncP_1$.
\end{theorem}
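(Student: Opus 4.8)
The plan is to reuse Algorithm~\ref{alg:generatorenum} almost verbatim and only adjust the halting threshold so that it accounts for the bias. Concretely, I would replace the line $K \leftarrow 2\cdot\big(p(|x|)-\log(\epsilon/2)\big)$ by
\[
  K \leftarrow 2\cdot b(x)\cdot\big(p(|x|)-\log(\epsilon/2)\big),
\]
where $p$ is, as before, a polynomial bounding the size of the elements of $A(x)$, and leave everything else untouched: keep drawing from the $b$-biased generator, output and record a drawn element in $E$ whenever it is new, and halt as soon as the number $r$ of draws exceeds $K\cdot|E|$. Since $b$ is a polynomial, $K$ is still polynomial in $|x|$ and in $\log(\epsilon^{-1})$, so the timing analysis of Theorem~\ref{thm:gentoincp1} transfers with no change: each loop iteration runs in polynomial time, and if the algorithm is still running after $tK$ iterations then $|E|\ge t$, so a run that does not stop prematurely is a run witnessing randomized $\IncP_1$ (indeed with running time polynomial in $\log(\epsilon^{-1})$, as noted there).

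The work is therefore concentrated, exactly as in the unbiased case, in bounding $\pr(T<s)$, where $s=|A(x)|$ and $T$ is the number of distinct solutions output at halting time. I would follow the same two-step decomposition. For $\pr(T\le s/2)$: when $t\le s/2$ distinct solutions have been found, each still-missing solution is drawn with probability at least $(s\,b(x))^{-1}$, so a \emph{new} solution is drawn with probability at least $(s-t)/(s\,b(x))\ge 1/(2b(x))$; hence $K$ consecutive draws without a new solution occur with probability at most $(1-1/(2b(x)))^{K}\le e^{-K/(2b(x))}\le 2^{-p(|x|)}(\epsilon/2)$, using $1-u\le e^{-u}$, then $e^{-u}\le 2^{-u}$, and the choice of $K$. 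A union bound over $t\le s/2$ together with $s\le 2^{p(|x|)}$ gives $\pr(T\le s/2)\le\epsilon/2$. For $\pr(s/2<T<s)$: if $T>s/2$ the loop survives at least $r=K(s/2)=b(x)\,s\,(p(|x|)-\log(\epsilon/2))$ draws, and a fixed $y\in A(x)$ is missed in all of them with probability at most $(1-(s\,b(x))^{-1})^{r}\le e^{-r/(s\,b(x))}\le 2^{-p(|x|)}(\epsilon/2)$; a union bound over the at most $2^{p(|x|)}$ elements of $A(x)$ again yields $\epsilon/2$. Summing, $\pr(T<s)\le\epsilon$, as required.

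The only genuinely new point — and the thing to be careful about — is that the bias must enter the estimates solely as the multiplicative factor $b(x)$ sitting inside $K$: once the new-solution probability drops from $1/2$ to $1/(2b(x))$, each geometric tail needs about $b(x)$ times as many draws to reach the same failure probability, which is precisely the extra factor we built into $K$ (and which keeps $K$ polynomial because $b$ is). Everything else is inherited from the proof of Theorem~\ref{thm:gentoincp1} — the trie encoding of $E$, the per-iteration polynomial cost, the incremental-time bookkeeping — so no further calculation is needed. Finally, amortizing the draws exactly as in Proposition~\ref{prop:Inc1} turns this into a randomized $\DelayP$ algorithm, giving the biased analogue of the corollary following Theorem~\ref{thm:gentoincp1}.
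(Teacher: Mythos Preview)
Your proposal is correct and follows exactly the approach sketched in the paper: multiply the halting threshold $K$ by the bias factor $b$ and observe that the two-part analysis of Theorem~\ref{thm:gentoincp1} goes through verbatim once the new-solution probability lower bound $1/2$ is replaced by $1/(2b(x))$. The paper's own proof is only a one-line sketch stating precisely this substitution, so your write-up is in fact more detailed than the original while remaining faithful to it.
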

\begin{proof}[Proof (sketch).]
It is sufficient to replace $K \leftarrow 2 \cdot \big(p(|x|)-\log(\epsilon/2)\big)$ in Algorithm~\ref{alg:generatorenum} by $K \leftarrow 2 \cdot b(|x|) \cdot \big(p(|x|)-\log(\epsilon/2)\big)$. The proof follows then exactly the proof of Theorem~\ref{thm:gentoincp1}. 
\end{proof}

Theorem~\ref{thm:gentoincp1biased} is an improved version of Theorem 2, Section 2.1.2 in~\cite{Goldberg91}. It is not hard to see that in our algorithm, the average delay between two solutions is $O(p(|x|)g(|x|)b(|x|))$ where $g$ is the runtime of the generator. The average delay of Goldberg's algorithm is, with our notations, $O(p(|x|)^3g(|x|)b(|x|))$.

% We rely on the 
% The number of solutions already enumerated is essential in Algorithm~\ref{alg:generatorenum} and it is unavoidable to use a memory space of size roughly $|A(x)|$ to maintain this number since we have to remember every element during 

\subparagraph{Polynomial space algorithm.}  
The main default of Algorithm~\ref{alg:generatorenum} is that it stores all solutions enumerated and therefore needs a space which may be exponential. It seems necessary to encode the subset of already generated solutions and these subsets are in doubly exponential number and thus cannot be encoded in polynomial space. Therefore the enumeration algorithm needs time to rule out a large number of possible subsets of generated solutions. This idea has been made precise by Goldberg (Theorem $3$, p.$33$~\cite{Goldberg91}): the product of the delay and the space is lower bounded by the number of solutions to output up to a polynomial factor. On the other hand it is easy to build an enumeration algorithm with such space and delay, by generating solutions 
by blocks in lexicographic order (Theorem $5$, p.$42$~\cite{Goldberg91}). The proof of the lower bound uses the fact that the enumeration algorithm can only output solutions which are given by calls to the generator. A set of possible initial sequences of output elements in the enumeration is built so that its cardinality is bounded by an exponential in the space used and that the enumeration produces one of these sequences with high probability. Then if the delay is to small, with high probability the calls to the generator has not produced any of those special sequences which ends the proof.
 
However, if we allow \emph{unbounded repetitions} of solutions in the enumeration algorithm we can devise an incremental polynomial algorithm with polynomial space. The main difficulty is again to decide when to stop so that no solution is forgotten with high probability. The method used in Algorithm~\ref{alg:generatorenum} does not work in this case since we cannot maintain the number of distinct solutions that have been output so far. However, there exists data structures which allow to approximate the cardinal of a dynamic set using only a logarithmic number of bits in the size of the set~\cite{flajolet1985probabilistic,kane2010optimal}.
The idea is to apply a hash function to each element seen and to remember an aggregated information on the bits of the hashed elements. Algorithm~\ref{alg:generatorenumspace} shows how we can exploit such datastructures to design a randomized incremental algorithm from a uniform generator. Unlike Algorithm~\ref{alg:generatorenum}, Algorithm~\ref{alg:generatorenumspace} cannot be turned into a polynomial delay algorithm since it would require exponential space and our improvement would then be useless.
 
\begin{algorithm}
 \KwData{$x \in \{0,1\}, \epsilon \in \mathbb{Q}_+$}
 \Begin{
   Initialize $E$;
   $r \leftarrow 0$\;
   $K \leftarrow 4 \cdot \big(p(|x|)-\log(\epsilon/4)\big)$\;
   \While{$r \leq K \cdot \mathsf{estimate}(E)$}{
     Draw $e \in A(x)$ uniformly and
     $r \leftarrow r+1$\;
     Output $e$ and 
     $\mathsf{update}(E,e)$\;
     }
   }
 
  \caption{An algorithm in randomized $\IncP_1$ with polynomial space such that every element of $A(x)$ is of size at most $p(|x|)$.}
  \label{alg:generatorenumspace}
\end{algorithm}
\begin{proposition}
 If $\enum{A} \in \EnumP$ has a polytime uniform generator, then there is an enumeration algorithm 
 in randomized $\IncP_1$ \emph{with repetitions} and \emph{polynomial space}.
\end{proposition}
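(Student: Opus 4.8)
The plan is to analyze Algorithm~\ref{alg:generatorenumspace}, following the pattern of the proof of Theorem~\ref{thm:gentoincp1} but replacing the exact set $E$ of already-produced solutions by a space-efficient sketch estimating the \emph{number of distinct} inserted elements. First I would fix this data structure: taking the median of $\Theta(p(|x|)+\log(1/\epsilon))$ independent constant-factor distinct-element estimators of~\cite{flajolet1985probabilistic,kane2010optimal} yields a structure $E$ whose $\mathsf{update}$ and $\mathsf{estimate}$ operations run in polynomial time, which uses space polynomial in $|x|$ and $\log(1/\epsilon)$ (each copy needs $O(p(|x|))$ bits, as the universe has size at most $2^{p(|x|)}$), whose state depends only on the \emph{set} of inserted elements, and which satisfies: for any fixed sequence of insertions, with probability at least $1-\epsilon/4$ over the randomness of $E$, at every point of the run $\mathsf{estimate}(E)$ lies between $D/2$ and $2D$, where $D$ is the current number of distinct inserted elements. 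The union bound here ranges over the at most $|A(x)|+1\leq 2^{p(|x|)}+1$ values the distinct count can take, which is affordable precisely because $\mathsf{estimate}(E)$ only changes when a new distinct element is inserted; so one union bound over distinct-count values, not over time steps, suffices. Write $\mathcal{G}$ for this good event, $s=|A(x)|$, $D_r$ for the number of distinct elements drawn during the first $r$ iterations, and $T=D_N$ where $N$ is the number of iterations performed before the algorithm halts.

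Second, I would establish the running-time guarantees under $\mathcal{G}$. Every iteration makes one generator call, one $\mathsf{update}$ and one $\mathsf{estimate}$, hence runs in time polynomial in $|x|$ and $\log(1/\epsilon)$. If the algorithm has not yet halted after $\rho$ iterations then the last loop test passed, so $\rho\leq K\cdot\mathsf{estimate}(E)\leq 2KD_\rho$; thus $D_\rho\geq \rho/(2K)$ distinct solutions have already been output, which shows that $m$ distinct solutions are output within $O(Km)$ iterations (unless the algorithm has already halted, in which case, on the correctness event below, it has output all $s\geq m$ of them). Symmetrically $\mathsf{estimate}(E)\leq 2D_\rho\leq 2s$, so the loop test fails as soon as $\rho>2Ks$ and the algorithm halts within $2Ks+1$ iterations; since $K$ is polynomial in $p(|x|)$ and $\log(1/\epsilon)$, this gives both the incremental time bound required for randomized $\IncP_1$ and a halting time bounded by $|A(x)|$ times a polynomial in $|x|$ and $\log(1/\epsilon)$, as Definition~\ref{def:incp1rand} allows.

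Third --- and this is the core of the argument --- I would bound $\pr(T<s)$, the probability that some solution is never output, by $\pr(\neg\mathcal{G})+\pr(T\leq s/2\wedge\mathcal{G})+\pr(s/2<T<s\wedge\mathcal{G})$. On $\mathcal{G}$, when the algorithm halts we have $N>K\cdot\mathsf{estimate}(E)\geq KD_N/2=KT/2$, so it has performed more than $KT/2$ draws; since $D_r$ is non-decreasing, this gives $\{T=j\}\cap\mathcal{G}\subseteq\{D_{\lceil Kj/2\rceil}\leq j\}$ and $\{s/2<T<s\}\cap\mathcal{G}\subseteq\{D_{\lceil Ks/4\rceil}\leq s-1\}$. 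Then I would use the elementary estimates $\pr(D_m\leq j)\leq\binom{s}{j}(j/s)^m\leq(2e)^j 2^{-m}$ for $j\leq s/2$, and $\pr(D_m\leq s-1)\leq s(1-1/s)^m\leq s\,2^{-m/s}$. With $K=4(p(|x|)-\log(\epsilon/4))$ as in Algorithm~\ref{alg:generatorenumspace}, the first gives $\pr(T\leq s/2\wedge\mathcal{G})\leq\sum_{j\geq 1}(2e)^j 2^{-Kj/2}\leq\epsilon/4$, and the second gives $\pr(s/2<T<s\wedge\mathcal{G})\leq s\,2^{-K/4}\leq\epsilon/4$ since $s\leq 2^{p(|x|)}$. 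Together with $\pr(\neg\mathcal{G})\leq\epsilon/4$ this yields $\pr(T<s)\leq\epsilon$; hence with probability at least $1-\epsilon$ the algorithm outputs every element of $A(x)$, possibly with repetitions, within the time and space bounds of the second step, which is what we want.

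The main obstacle I anticipate is the temptation to believe that a count known only up to a constant multiplicative factor is too crude to decide when to stop: the direct argument of Theorem~\ref{thm:gentoincp1}, namely that after reaching $j$ solutions the algorithm performs at least $K$ further draws before halting, degrades with an approximate count to a bound of the form $K(1-O(\gamma j))$ that becomes vacuous for large $j$, and forcing the relative error $\gamma$ small enough to repair it would cost exponential space. The key point is that this per-level bound is not needed: it is enough that halting with exactly $j$ distinct solutions forces $\Omega(Kj)$ draws overall (which a constant-factor estimate does guarantee), and the probability of seeing only $j\leq s/2$ distinct elements among $\Omega(Kj)$ uniform draws is exponentially small in $Kj$ and summable over $j$. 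The only other delicate point, handled in the first step, is making the sketch reliable at all of the up-to-$|A(x)|$ times where the distinct count increases while keeping the space polynomial.
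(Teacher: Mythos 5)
Your proof is correct, and it takes a genuinely different route from the paper's. The paper's argument (which is very terse) is essentially a coupling: since on the good event $\mathsf{estimate}(E)\geq |E|/2$, the loop condition $r\leq K\cdot\mathsf{estimate}(E)$ is at least as permissive as the exact test $r\leq (K/2)|E|$, so the algorithm performs at least as many draws as the exact-count algorithm of Theorem~\ref{thm:gentoincp1} run with threshold $K/2 = 2\bigl(p(|x|)-\log(\epsilon/4)\bigr)$; the failure probability of that exact algorithm is $\leq\epsilon/2$ by the earlier theorem, and adding the $\epsilon/2$ probability of the sketch misbehaving gives $\epsilon$. This sidesteps entirely the issue you flag (that the per-level ``at least $K$ wasted draws after the $t$-th solution'' argument becomes vacuous under a constant-factor count error), but only via an implicit monotonicity observation that the paper does not spell out. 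You instead replace the per-level bound with a direct estimate $\pr(D_{\lceil Kj/2\rceil}\leq j)\leq(2e)^j2^{-Kj/2}$ summed over $j\leq s/2$, which is self-contained and makes explicit why a constant-factor estimate suffices: what matters is that halting at distinct count $j$ forces $\Omega(Kj)$ total draws, not $\Omega(K)$ fresh ones per level. Your handling of the ``at all times'' guarantee for the sketch --- one union bound over the at most $|A(x)|+1$ distinct-count values, exploiting that the sketch state is a deterministic function of the inserted set --- is also more careful than what the paper states, which simply asserts a ``2-approximation during all the algorithm with probability $1-\epsilon/2$'' without discussing how that uniform-in-time guarantee is obtained from the one-shot guarantee of~\cite{kane2010optimal}. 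Both approaches are sound; yours is longer but more transparent about where the constant-factor slack is absorbed, while the paper's is shorter but leaves the coupling and the time-uniformity of the sketch guarantee to the reader.
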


\begin{proof}
 The procedure $\mathsf{update}$ in Algorithm~\ref{alg:generatorenumspace} maintains a data structure which allows $\mathsf{estimate}$ to output an approximation of $|E|$.
 If we use the results of~\cite{kane2010optimal}, we can get a $2$-approximation of $|E|$ during \emph{all} the algorithm with probability $1-\epsilon/2$. The data structure uses a space $\log(|E|)\log(\epsilon^{-1})$. The process $\mathsf{update}(E,e)$ does $O(\log(\epsilon^{-1}))$ arithmetic operations and $\mathsf{estimate}$ does $O(1)$ arithmetic operations.
 The arithmetic operations are over solutions seen as integers which are of size polynomial in $n$. The analysis of the delay is the same as before, but to the cost of generating a solution, we add the cost of computing $\mathsf{update}(E,e)$ and $\mathsf{estimate}$ which are also polynomial in $n$. 

 The analysis of the correctness of the algorithm is the same, except that now $v$ is a $2$-approximation of $|E|$ with probability
 $1-\frac{\epsilon}{2}$. We have adapted the value of $K$ such that with probability $\epsilon/2$ the algorithm will not stop before generating all solutions even if 
 $|E|$ is approximated by $|E|/2$. Therefore the probability to wrongly evaluate $|E|$ plus 
 the probability that the algorithm stops too early is less than $\epsilon$.
 
\end{proof}

The method we just described here can be relevant, when we have an enumeration algorithm using the supergraph method: a connected graph whose vertices are all the solutions is defined in such a way that the edges incident to a vertex can be enumerated with polynomial delay. The enumeration algorithm does a traversal of this graph which requires to store all generated solutions to navigate the graph.
The memory used can thus be exponential. On the other hand doing a random walk over the graph of solutions often yields a polynomial time uniform generator. If it is the case using Algorithm~\ref{alg:generatorenumspace} we get a randomized polynomial delay  algorithm using polynomial space only.

The more classical way to avoid exponential memory is Lawler's method~\cite{LawlerLK80} or reverse search, that is defining an implicit spanning tree in the graph which can be navigated with polynomial memory. This method is not always relevant since it based on solving a search problem which may be $\NP$-hard.  One could also traverse the graph of solutions using only a logarithmic space in the numbers of solutions using a universal sequence~\cite{reingold2008undirected} but this method gives no guarantee on the delay and has a huge slowdown in practice.

\section*{Acknowledgement}

We are thankfull to Arnaud Durand for numerous conversations about enumeration complexity 
and for having introduced the subject to us. This work was partially supported by the French Agence Nationale de la Recherche, AGGREG project reference ANR-14-CE25-0017-01 and by the ESPRC grant EP/LO20408/1.

\bibliography{biblio}
\bibliographystyle{plain}
\end{document}